\documentclass{article}

    \PassOptionsToPackage{numbers, compress}{natbib}

\usepackage[preprint]{neurips_2026}

\usepackage[utf8]{inputenc} %
\usepackage[T1]{fontenc}    %
\usepackage{url}            %
\usepackage{booktabs}       %
\usepackage{amsfonts}       %
\usepackage{nicefrac}       %
\usepackage{microtype}      %
\usepackage{xcolor}         %

\usepackage{graphicx}
\usepackage{subcaption}
\usepackage{amsmath}
\usepackage{amssymb}
\usepackage{mathtools}
\usepackage{amsthm}
\usepackage{hyperref}       %
\usepackage{booktabs}
\usepackage{colortbl}
\usepackage{xcolor}
\usepackage{tabularx}  
\usepackage{etoolbox}
\usepackage{multirow}
\usepackage[capitalize,noabbrev]{cleveref}
\usepackage[textsize=tiny]{todonotes}
\usepackage{wrapfig}
\usepackage{float}
\usepackage{algorithm}
\usepackage{algorithmic}
\usepackage{aliascnt}

\theoremstyle{plain}

\newaliascnt{proposition}{theorem}

\aliascntresetthe{proposition}
\crefname{proposition}{Proposition}{Propositions}

\newaliascnt{lemma}{theorem}
\newtheorem{lemma}[lemma]{Lemma}
\aliascntresetthe{lemma}
\crefname{lemma}{Lemma}{Lemmas}

\newaliascnt{corollary}{theorem}

\aliascntresetthe{corollary}
\crefname{corollary}{Corollary}{Corollaries}

\theoremstyle{definition}
\newaliascnt{definition}{theorem}
\newtheorem{definition}[definition]{Definition}
\aliascntresetthe{definition}
\crefname{definition}{Definition}{Definitions}
\newaliascnt{assumption}{theorem}
\newtheorem{assumption}[assumption]{Assumption}
\aliascntresetthe{assumption}
\crefname{assumption}{Assumption}{Assumptions}

\theoremstyle{remark}

\usepackage{thmtools}
\usepackage{thm-restate}

\title{MUTE: Return-Preserving Communication Unlearning for Efficient Multi-Agent Coordination}

\author{%
  Rui Zuo \\
  Syracuse University\\
  \texttt{rzuo02@syr.edu} \\
  \And
  Qinwei Huang \\
  Syracuse University\\
  \texttt{qhuang18@syr.edu} \\
  \And
  Mingyang Li \\
  Syracuse University\\
  \texttt{mli170@syr.edu} \\
  \And
  Zhenhang Zhang \\
  Syracuse University\\
  \texttt{zzhan281@syr.edu} \\
  \And
  Simon Khan \\
  Air Force Research Laboratory\\
  \texttt{simon.khan@us.af.mil} \\
  \And
  Qinru Qiu \\
  Syracuse University\\
  \texttt{qiqiu@syr.com} \\
}

\newlength\savewidth\newcommand\shline{\noalign{\global\savewidth\arrayrulewidth
  \global\arrayrulewidth 1pt}\hline\noalign{\global\arrayrulewidth\savewidth}}

\newcommand{\tablestyle}[2]{\setlength{\tabcolsep}{#1}\renewcommand{\arraystretch}{#2}\centering\footnotesize}
\definecolor{baselinecolor}{gray}{.9}

\newcolumntype{C}[1]{>{\centering\arraybackslash}p{#1pt}}
\newcolumntype{L}[1]{>{\raggedright\arraybackslash}p{#1pt}}
\newcolumntype{R}[1]{>{\raggedleft\arraybackslash}p{#1pt}}

\usepackage{xspace}
\makeatletter
\DeclareRobustCommand\onedot{\futurelet\@let@token\@onedot}
\def\@onedot{\ifx\@let@token.\else.\null\fi\xspace}

\makeatother

\colorlet{darkgreen}{green!65!black}
\colorlet{darkblue}{blue!75!black}
\colorlet{darkred}{red!80!black}
\definecolor{lightblue}{HTML}{0071bc}
\definecolor{lightgreen}{HTML}{39b54a}

\definecolor{bad}{RGB}{255,235,235}    %
\definecolor{good}{RGB}{235,255,235}    %

\definecolor{lightgrey}{gray}{0.8}

\newcommand{\gline}{\arrayrulecolor{lightgrey}\hline\arrayrulecolor{black}}

\newcommand{\btau}{\boldsymbol{\tau}}

\begin{document}

\maketitle
\addtocontents{toc}{\protect\setcounter{tocdepth}{0}}

\begin{abstract}
Inter-agent communication is critical for coordinating Multi-Agent Reinforcement Learning (MARL) agents under partial observability to perform effectively in cooperative games; however, real-world bandwidth constraints demand sparse interactions.
Prior approaches primarily address this trade-off by optimizing information-theoretic surrogates.
We argue that these statistical proxies are fundamentally misaligned with the true objective: a message can be highly informative yet irrelevant to the joint return of the task.
In this work, we propose \textbf{M}essage \textbf{U}nlearning for \textbf{T}argeted \textbf{E}fficiency (\textbf{MUTE}), a framework that views communication reduction as a value-guided machine unlearning problem.
MUTE rigorously quantifies the \textit{Counterfactual Message Value} using an attention-based estimator, and systematically unlearns the transmission of low-value messages from a policy trained without any communication constraints. This is achieved through a dual-objective mechanism that enforces communication sparsity while preserving the return of the original joint policy.
We derive a theoretical upper bound on the performance gap induced by this sparsification, guaranteeing controlled return degradation.
We also empirically evaluate MUTE on various complex multi-agent environments, achieving 80\% to 90\% bandwidth reduction while maintaining performance comparable to state-of-the-art baselines.
\end{abstract}

\section{Introduction}
\label{sec:intro}

Multi-Agent Reinforcement Learning (MARL) has emerged as a powerful paradigm for solving complex coordination problems, ranging from autonomous vehicles~\cite{marl_autonomous_vehicles} to micromanagement in StarCraft II \cite{smac}. To address the challenges of non-stationarity and scalability, the Centralized Training with Decentralized Execution (CTDE)~\cite{ctde} framework has become the standard, exemplified by value-based methods~\cite{qmix,maic} and actor-critic approaches~\cite{mappo,maddpg,dial}.
However, in environments characterized by partial observability and policy-induced non-stationarity, decentralized execution often falters. Communication emerges as a pivotal solution to this challenge \cite{dial,commnet}, empowering agents to transcend the limitations of local perception by exchanging diverse signals—ranging from raw observations~\cite{ic3net, masia} and latent intentions~\cite{maic} to encoded experiences~\cite{t2mac}. This continuous information exchange serves to stabilize the learning process against environmental non-stationarity, granting agents a clearer understanding of the global state~\cite{maic,ndq}.

While unrestricted communication can mitigate partial observability, bandwidth remains a scarce resource in real-world deployments~\cite{imac, marl_comm_survey}. 
Prior works have primarily addressed this constraint by optimizing information-theoretic~\cite{imac, ndq, maic} auxiliary objectives to reduce communication.
For instance, IMAC~\cite{imac} enforces communication sparsity via the Information Bottleneck principle, explicitly \textit{minimizing} the mutual information between observations and messages, which implicitly assumes that high-entropy messages are redundant.
NDQ~\cite{ndq} and MAIC~\cite{maic} prioritize messages by \textit{maximizing} the mutual information between the message and the recipient's action selection.
This promotes messages that alter behavior, regardless of whether that alteration leads to higher returns.
By optimizing for statistical significance without directly targeting the joint return, these methods risk prioritizing messages that are merely informative over those that are truly return-critical.

In this work, we reframe communication reduction as a return-optimization problem, directly learning which messages are critical for maximizing team performance.
Although SMS~\cite{sms} shares this objective, it utilizes Shapley values~\cite{shapley} to estimate each message's marginal contribution  to the expected return. However, exact Shapley value computation requires evaluating all possible message subsets, resulting in exponential computational complexity. Consequently, SMS is forced to rely on sampling approximations that exhibit high-variance, and restrict the value function to a linear decomposition~\cite{dop}, which severely limits its applicability to complex tasks requiring non-monotonic coordination.
In contrast, we propose a more tractable approach to estimate the marginal contribution of a message via \emph{Counterfactual Message Value} (CMV), which captures the difference in the joint action-value function between two counterfactual scenarios: when the message is transmitted versus when it is masked. 
We train a Message-Value Estimator (MVE) to approximate the CMV, and apply the MVE to identify the subset of redundant messages whose removal does not degrade the expected joint return relative to a policy with unrestricted communication.

To efficiently remove the identified redundancies without destabilizing learned coordination, we propose \textit{Message Unlearning for Targeted Efficiency} (MUTE).
Inspired by machine unlearning~\cite{unlearn_original, unlearn1, unlearn_survey}, which seeks to erase the influence of specific training data from trained models, we view communication reduction as a targeted unlearning task that progressively \textit{unlearns} the transmission of low-value messages identified by our estimator.
To ensure reliability, we derive a theoretical upper bound on the performance gap induced by unlearning, guaranteeing that any return degradation due to discarding messages is rigorously bounded within a controllable margin.
We further design a dual-objective loss consisting of (i) a sparsity term that penalizes low-value messages and (ii) an anchoring term that prevents policy collapse by regularizing the agents' behavior toward the original policy.

We evaluate MUTE on diverse benchmarks, including Hallway~\cite{ndq}, StarCraft Multi-Agent
Challenge (SMAC)~\cite{smac}, Traffic Junction (TJ)~\cite{ic3net}, SMAC-Communication~\cite{ndq} and SMACv2~\cite{smacv2}.
Empirical results demonstrate that MUTE achieves performance comparable to state-of-the-art approaches with up to 99\% communication reduction.

\section{Related Work}

\begin{wraptable}{r}{0.35\textwidth}
\centering
\tablestyle{4pt}{1.2}
\begin{tabular}{L{70} | L{50}}
Category & Method \\
\shline
\multirow{2}{*}{\textbf{No Communication}} & QMIX~\cite{qmix}\\
& MAPPO~\cite{mappo} \\
\gline
\textbf{Unrestricted} & MASIA~\cite{masia} \\
\gline
\multirow{2}{*}{\textbf{Mutual Information}} & NDQ~\cite{ndq} \\
& IMAC~\cite{imac} \\
\gline
\textbf{Gating} & IC3Net~\cite{ic3net} \\
\gline
\multirow{2}{*}{\textbf{Attention}} & TarMAC~\cite{tarmac} \\
& MAIC~\cite{maic} \\
\gline
\textbf{Shapley Value} & SMS~\cite{sms} \\
\gline
\rowcolor{good} \textbf{Unlearning (ours)} & \textbf{MUTE} \\
\end{tabular}
\caption{\textbf{Comparison of Multi-agent Communication baselines.}}
\label{tab:compare}
\vspace{-2em}
\end{wraptable}

\paragraph{Multi-agent Communication}
In the paradigm of CTDE, agents learn to coordinate without explicit communication.
Value-based methods like QMIX~\cite{qmix} have shown success by factoring the joint value function, while MAPPO~\cite{mappo} has demonstrated strong performance using a centralized critic approach.
However, because these methods rely solely on implicit coordination derived from training history, they often struggle under partial observability, where agents lack the local information necessary to infer the global state.

To overcome this informational bottleneck, explicit communication protocols have been introduced.
Frameworks such as TarMAC~\cite{tarmac} and MAIC~\cite{maic} utilize soft attention mechanisms to weigh incoming messages, allowing agents to selectively focus on relevant information from their teammates.
Building on this, T2MAC~\cite{t2mac} further optimizes efficiency by learning to explicitly target specific recipients rather than broadcasting to all.
MASIA~\cite{masia} extends these ideas by integrating unrestricted communication into a value-based framework.
Although these methods significantly improve coordination, they often ignore bandwidth constraints, leading to redundant information processing and high communication overhead.

Several works have been proposed to mitigate these costs.
IC3Net~\cite{ic3net} employs a gating mechanism that allows agents to learn when to block communication based on local context.
Information-theoretic approaches like NDQ~\cite{ndq} and IMAC~\cite{imac} leverage the Information Bottleneck principle to compress messages by minimizing mutual information while preserving value prediction.
Similarly, SMS~\cite{sms} adopts a game-theoretic perspective, estimating the Shapley value of messages to prune those with low marginal contribution.
Distinct from these methods, which often struggle to balance sparsity and performance during training, our MUTE framework introduces a novel \textit{unlearning} paradigm: we initialize with a policy without any communication constraints and progressively prune redundant channels via behavioral anchoring, ensuring efficiency without sacrificing cooperative returns.

\section{Problem Formulation}

We consider a fully cooperative multi-agent communication task modeled as a Decentralized Partially Observable Markov Decision Process (Dec-POMDP), defined by the tuple $\mathcal{G} = \langle \mathcal{N}, \mathcal{S}, \mathcal{A}, P, r, \Omega, O, \gamma, \mathcal{M} \rangle$.
$\mathcal{N} = \{1, \dots, N\}$ denotes the set of $N$ agents.
$\mathcal{S}$ represents the global state of the environment.
$\mathcal{A}$, $\Omega$, and $\mathcal{M}$ represent the action, observation, and message spaces, respectively, while $O$ is the observation function. We use $\tau_i^t$ to denote the local action-observation history of agent $i$ up to time $t$, $\tau_i^t = (o_i^0, a_i^0, \dots, a_i^{t-1}, o_i^t)$.

We adopt a broadcast communication protocol.
At each timestep $t$, each agent $i$ generates a message $m_i^t$ using a learnable message generator $f_{\psi}$, conditioned on its local action-observation history: $m_i^t = f_{\psi}(\tau_i^t)$.
Under the broadcast setting, this message is transmitted identically to all other agents.
Consequently, agent $i$ receives a set of messages from its teammates, denoted as $\mathbf{m}_{-i}^t = \{m_j^t\}_{j \in \mathcal{N} \setminus \{i\}}$.
After the communication phase, each agent selects an action $a_i \in \mathcal{A}$.
The agent's policy $\pi(a_i | \tau_i^t, \mathbf{m}_{-i}^t)$ is conditioned on its local action-observation history $\tau_i^t$ and the set of received messages.

The joint action $\mathbf{a} = \langle a_1, \dots, a_n \rangle$ triggers a transition to the next state $s' \sim P(\cdot | s, \mathbf{a})$ and yields a global reward $r(s, \mathbf{a})$.
The objective is to learn a joint policy $\boldsymbol{\pi}$ that maximizes the expected discounted return $J(\boldsymbol{\pi}) = \mathbb{E} \left[\sum_{t=0}^{\infty} \gamma^t r_t \right] = \mathbb{E}_{\boldsymbol{\tau}, \mathbf{a}} [Q_{\text{tot}}(\boldsymbol{\tau}, \mathbf{a}; \theta)]$.
In value decomposition methods~\cite{qmix}, the joint action-value function $Q_{\text{tot}}$, parametrized by $\boldsymbol{\theta}$, is composed of individual utility functions $Q_i(\tau_i, a_i; \theta)$, which determine the decentralized policies.

We adopt the CTDE paradigm.
In the centralized RL training phase, the network is trained end-to-end by minimizing the standard Temporal Difference (TD) error:
\begin{equation}
    \mathcal{L}(\theta) = \mathbb{E}_{\mathcal{D}} \left[ \left( y^{\text{tot}} - Q_{\text{tot}}(\boldsymbol{\tau}, \mathbf{a}; \theta) \right)^2 \right]
    \label{eq:td_loss}
\end{equation}
where $y^{\text{tot}} = r + \gamma \max_{\mathbf{a}} Q_{\text{tot}}(\boldsymbol{\tau}', \mathbf{a}; \theta^-)$ is the target value, computed using the target network parameters $\theta^-$, and $\tau'$ represents the action-observation history at state $s'$.
We denote the converged joint action-value function as $Q^*_{\text{tot}}$ and the converged individual utility functions as $Q_i^*$.
For simplicity, we omit the time index $t$ hereafter.

\section{Method}

\subsection{Revisiting Communication Reduction: A Return-Preserving Perspective}

As summarized in \cref{tab:compare}, existing approaches to communication reduction largely rely on penalizing message generation during policy learning.
Whether through Gating~\cite{ic3net}, Attention~\cite{tarmac}, or Mutual Information regularization~\cite{ndq, imac}, these methods fundamentally treat communication as a cost to be balanced against environmental returns.
This formulation forces agents to simultaneously learn \textit{how} to cooperate (maximize return) and \textit{when} to remain silent (minimize communication cost), often leading to unstable convergence or suboptimal policies where agents sacrifice critical coordination to minimize penalties~\cite{karten2023towards, wei2025rescom}.

To empirically verify whether simultaneously optimizing task performance and minimizing communication cost compromises coordination, we conducted a preliminary experiment on the SMAC-Communication (\texttt{1o\_2r\_vs\_4r})~\cite{ndq} scenario, as shown in \Cref{fig:intro_exp}, where agents were trained with varying communication rates for 1M steps. The results reveal that the agents' performance degrades significantly as the communication rate decreases.
This empirical evidence supports our hypothesis that indiscriminate communication reduction compromises coordination.
\textit{This suggests that a better approach may be to first train the policy under unrestricted communication bandwidth, and then unlearn unnecessary communication to reach the bandwidth requirements.}

\begin{wrapfigure}{r}{0.45\textwidth}
\centering
\includegraphics[width=\linewidth]{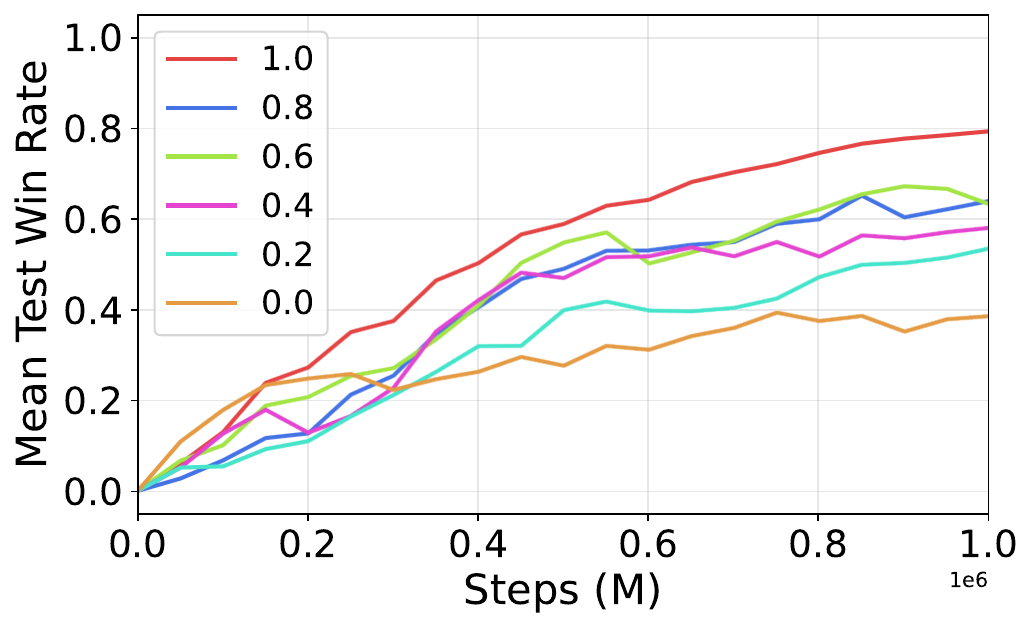}
\caption{Impact of communication reduction on performance during MARL training. The performance degradation highlights the compromise in coordination when communication is restricted indiscriminately.}
\label{fig:intro_exp}
\end{wrapfigure}

Moreover, as discussed in~\cref{sec:intro}, by optimizing information-theoretic objectives without directly targeting the joint return, these methods~\cite{ndq,maic,imac} risk prioritizing messages that are merely informative over those that are truly return-critical.
For example, consider agent $j$ whose local history $\tau_j$ alone is sufficient to select the optimal action $a_j^*$.
Suppose sender $i$ transmits a misleading message $m_{\text{bad}}$.
Influenced by this message, agent $j$ instead executes a suboptimal action $a_{\text{bad}}$, yielding a low reward $r_{\min}$.
In this case, the message strongly drives the action, producing high mutual information $\big(I(A_j; M_{i}) \gg 0\big)$ even as the return degrades.
To address the aforementioned challenges, we formulate communication reduction as a \textit{redundancy elimination} problem.
We posit that a fully communicative expert policy naturally contains a subset of return-critical messages sufficient to maintain (near-)optimal performance, mixed with redundant messages.
Therefore, instead of balancing competing objectives from scratch, our approach aims to \textit{identify and discard} these redundant messages while preserving the original returns.

Formally, our goal is to learn a sparse communication policy $\boldsymbol{\pi}$ that minimizes total communication overhead while rigorously bounding return gap:
\begin{equation}
\label{eq:constrained_opt}
\min_{\boldsymbol{\pi}} \mathbb{E} \left[ \sum_{t=0}^{T} \sum_{i=1}^{n} C(m_i^t) \right] \quad \text{s.t.} \quad J(\boldsymbol{\pi}^*) - J(\boldsymbol{\pi}) \le \epsilon
\end{equation}
where $C(m_i^t) \ge 0$ represents the message cost, $\boldsymbol{\pi}^{*}$ is the policy derived from the converged critic $Q_{\text{tot}}^*$ (i.e., greedy w.r.t $Q_{\text{tot}}^*$), serving as the performance upper bound, and $J(\boldsymbol{\pi})$ represents the expected discounted return under $\boldsymbol{\pi}$.

\subsection{Counterfactual Message Value Estimation}
\label{sec:cmv}

To solve the constrained optimization in \cref{eq:constrained_opt}, we must strictly distinguish between essential and redundant communication.
However, directly measuring the impact of a single message on the total return $J(\boldsymbol{\pi})$ is intractable due to the long time horizon.
In value-based MARL, the joint action-value function $Q_{\text{tot}}$ serves as a reliable approximation of the expected future return.
Leveraging this, we introduce \textit{Counterfactual Message Value} (CMV) to measure the value of message $m_{i}$ by comparing the global Q-value with and without that message.
This concept mirrors the counterfactual advantage function in COMA~\cite{coma}; however, instead of estimating the advantage of action, CMV measures the \textit{message's advantage}.

\begin{definition}[Counterfactual Message Value]
The counterfactual value of broadcast message $m_{i}$ is the difference in expected return caused by masking that specific message for all receivers.

\begin{equation}
    \Delta Q_{i}(\boldsymbol{\tau}, \mathbf{m}) = Q_{\text{tot}}^{*}(\boldsymbol{\tau}, \mathbf{a}) - Q_{\text{tot}}^{*}(\boldsymbol{\tau}, \tilde{\mathbf{a}})
    \label{eq:cmv}
\end{equation}
where $\mathbf{a} \sim \boldsymbol{\pi}^*(\cdot \mid \boldsymbol{\tau}, \mathbf{m})$ is the actual joint action sampled from the optimal policy,
and $\tilde{\mathbf{a}}$ is the counterfactual joint action where the actions of all teammates are resampled under the masked message:
\begin{equation}
    \tilde{\mathbf{a}} = (a_i, \tilde{\mathbf{a}}_{-i}) \quad \text{with} \quad \tilde{\mathbf{a}}_{-i} \sim \boldsymbol{\pi}^*_{-i}(\cdot \mid \boldsymbol{\tau}_{-i}, \mathbf{m} \setminus \{m_{i}\})
\end{equation}
Here, $a_i$ denotes the fixed action of the sender, and $\tilde{\mathbf{a}}_{-i}$ denotes the resampled actions of all other agents given that $m_i$ is masked.
For notational convenience, we denote the values of these induced joint actions as $Q_{\text{tot}}^*(\boldsymbol{\tau}, \mathbf{m}) \triangleq Q_{\text{tot}}^*(\boldsymbol{\tau}, \mathbf{a})$ and $Q_{\text{tot}}^*(\boldsymbol{\tau}, \mathbf{m}\setminus\{m_i\}) \triangleq Q_{\text{tot}}^*(\boldsymbol{\tau}, \tilde{\mathbf{a}})$.
Intuitively, $\Delta Q_{i}$ measures the return gap caused by silencing agent $i$.
If $\Delta Q_{i} \approx 0$, this specific signal is redundant given the current context and can be pruned without degrading performance.
Conversely, a large positive $\Delta Q_{i}$ indicates a return-critical signal that must be preserved.
\end{definition}

However, calculating the exact counterfactual baseline requires marginalizing over the receiver's policy and repeatedly querying the centralized critic $Q^*_{\text{tot}}$ for every communication channel for both masked and unmasked scenarios.
Therefore, estimating the CMV of all messages in a single communication round requires $O(N(N-1))$ forward passes.
This computational burden is prohibitively expensive, particularly for the downstream unlearning task, which demands efficient value estimation.

To address this inefficiency, we design the \textit{Message Value Estimator} (MVE) using an attention mechanism.
Crucially, this formulation allows us to predict the marginal values for all agents simultaneously in a single forward pass.
The choice of an attention architecture is motivated by two key properties.
First, the CMV must be \textit{permutation-invariant}, treating the incoming messages as an unordered set rather than a fixed sequence.
Second, the CMV is \textit{context-dependent}; for instance, if multiple agents broadcast identical information, the marginal value of any single message diminishes, as masking it does not result in information loss.
The self-attention mechanism naturally satisfies both requirements.

During the message value learning phase, the objective is:
\begin{equation}
    \mathcal{L}_{\text{MVE}}(\phi) = \frac{1}{N} \sum_{i=1}^{N} \left( V_\phi(\mathbf{m})_i - y_i \right)^2
    \label{eq:mve_loss}
\end{equation}
where $V_\phi(\mathbf{m})$ is the MVE parameterized by $\phi$, $V_{\phi}:\mathcal{M}\rightarrow \mathbb{R}^N$. %
The target $y_i$ represents the marginal contribution of agent $i$'s message, calculated using~\cref{eq:cmv}.
The MVE takes the joint set of all messages $\mathbf{m}$ as input. Because each message is generated directly from an agent's local observation history ($\mathbf{m} = f_\psi(\btau)$), $\mathbf{m}$ inherently encapsulates this vital state information. We verified this via an ablation study of a state-and-message variant, $V_\phi(s, \mathbf{m})$, which showed no performance gains and higher communication rates compared to our message-only model (full details are provided in the~\cref{app:mve_ablation}).

\subsection{Communication Unlearning}
To improve communication efficiency, we define the communication cost $C(\mathbf{m})$ in \cref{eq:constrained_opt} as the $L_1$ norm of low-value messages, yielding the sparsity loss:
\begin{equation}
    \mathcal{L}_{\text{sparse}}(\psi) = \sum_{m \in \mathcal{M}_{\text{red}}} \| m \|_1
    \label{eq:sparsity_loss}
\end{equation}
where $\mathcal{M}_{\text{red}} = \{ m_i \mid V_\phi(\mathbf{m})_i \le \lambda \mu_{\text{MVE}} \}$ denotes the set of messages whose estimated CMV falls below the adaptive sparsity threshold. Here, $\lambda$ is a scaling hyperparameter, and $\mu_{\text{MVE}}$ represents the exponential moving average of MVE predictions.
Rather than removing these messages outright, the $L_1$ penalty applies a continuous gradient pressure on their generators, gradually attenuating low-value channels while leaving the optimization free to revive any message whose CMV rises above $\lambda$ as the policy evolves.

Moreover, we approximate the return gap using \textit{Behavioral Anchoring}, which minimizes the divergence between the current policy $Q_{i}$ and the frozen baseline $Q_{i}^*$:
\begingroup
\small
\begin{equation}
    \mathcal{L}_{\text{anchor}}(\theta, \psi) = \sum_{i=1}^{N} \left\| Q_i^*(\tau_i, a_i, \mathbf{m}_{-i}) - Q_i(\tau_i, a_i, \mathbf{m}_{-i}; \theta) \right\|^2_2
    \label{eq:anchor_loss}
\end{equation}
\endgroup
By anchoring the individual value estimates rather than the global $Q_{\text{tot}}$, we ensure that each agent preserves its specific role and local policy stability, preventing the decentralized policies from collapsing as the communication channels are pruned.

Finally, we substitute the sparsity cost and the anchoring constraint into the Lagrangian objective.
This yields the final unlearning loss, optimized over both the policy parameters $\theta$ and the message generator parameters $\psi$:
\begin{equation}
    \mathcal{L}(\theta, \psi) = \mathcal{L}_{\text{sparse}}(\psi) + \beta \mathcal{L}_{\text{anchor}}(\theta, \psi)
    \label{eq:final_loss_combined}
\end{equation}
A subtle case arises when two messages carry the same useful information. In such a case, the CMV of each message may be near zero, since masking either one alone causes little or no return loss.
We show in \cref{app:duplicate_message} that the joint objective $\mathcal{L}$ optimizes toward keeping exactly one of the two messages, without jointly suppressing useful information.
The complete training procedure for MUTE is summarized in \cref{alg:mute_full} in~\cref{app:alg}.

To rigorously justify this intuition, we now provide a theoretical analysis quantifying the maximum performance degradation induced by our unlearning process.
\begin{restatable}[Performance Bound of Unlearning]{theorem}{ReturnGapThm}
\label{thm:return_gap}
Let $Q_{\text{tot}}^*$ and $Q_{\text{tot}}$ denote the joint value functions of the converged policy $\boldsymbol{\pi}^*$ and the unlearned policy $\boldsymbol{\pi}$, respectively.
Under~\cref{asm:smoothness}, and assuming MVE has a bounded error $| V_\phi(m) - \Delta Q(m) | \le \delta$, alongside an action-value approximation error bounded by $\|Q^*_{\text{tot}} - Q_{\text{tot}}\| \le \epsilon$, the performance gap is bounded by:
\begin{equation}
J(\boldsymbol{\pi}^*) - J(\boldsymbol{\pi}) \le \frac{1}{1-\gamma} \mathbb{E} \Bigg[ 2\epsilon + |\mathcal{M}_{\text{red}}| (\lambda + \delta) +  L \sum_{m_i \in \mathcal{M}_{\text{red}}} \|m_i\|^2 \Bigg]
\end{equation}
where $\gamma$ is the discount factor, $\lambda$ is the sparsity threshold, and $L$ is the smoothness constant.
\end{restatable}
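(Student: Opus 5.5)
The bound will come from a performance-difference decomposition. For a fixed state $s$ we write $J(\boldsymbol{\pi}^*) - J(\boldsymbol{\pi}) = \frac{1}{1-\gamma}\mathbb{E}_{s\sim d^{\boldsymbol{\pi}}}\big[ Q_{\text{tot}}^*(\boldsymbol{\tau},\mathbf{a}^*) - Q_{\text{tot}}^*(\boldsymbol{\tau},\mathbf{a}^{\boldsymbol{\pi}})\big]$. Here $\mathbf{a}^*$ is the action chosen under full communication $\mathbf{m}$, and $\mathbf{a}^{\boldsymbol{\pi}}$ is the action the unlearned policy takes under its attenuated messages $\hat{\mathbf{m}}$. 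The factor $\frac{1}{1-\gamma}$ comes from the discounted visitation measure. The rest of the argument bounds the per-step advantage gap inside the expectation.

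\textbf{Step 1: remove the approximation error.} By adding and subtracting $Q_{\text{tot}}(\boldsymbol{\tau},\cdot)$ at both joint actions and using $\|Q_{\text{tot}}^*-Q_{\text{tot}}\|\le\epsilon$, we pay $\epsilon$ at each of the two actions. This gives the $2\epsilon$ term. We are left with comparing the value of the full-message action to the value of the action induced by the attenuated messages.

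\textbf{Step 2: telescope over pruned messages.} Order $\mathcal{M}_{\text{red}}=\{m_{i_1},\dots,m_{i_K}\}$. We build a chain of message sets $\mathbf{m}^{(0)}=\mathbf{m}$ and $\mathbf{m}^{(k)}=\mathbf{m}^{(k-1)}\setminus\{m_{i_k}\}$, which ends at $\mathbf{m}^{(K)}$ with all redundant messages masked. We then split the gap into
\begin{equation*}
\big[Q_{\text{tot}}^*(\boldsymbol{\tau},\mathbf{m}) - Q_{\text{tot}}^*(\boldsymbol{\tau},\mathbf{m}^{(K)})\big] + \big[Q_{\text{tot}}^*(\boldsymbol{\tau},\mathbf{m}^{(K)}) - Q_{\text{tot}}^*(\boldsymbol{\tau},\hat{\mathbf{m}})\big].
\end{equation*}
The first bracket telescopes into $K$ terms. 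Each term is a counterfactual message value $\Delta Q_{i_k}$ evaluated in the context $\mathbf{m}^{(k-1)}$. For each such term, $\Delta Q_{i_k}\le V_\phi(\cdot)_{i_k}+\delta\le\lambda+\delta$, using the MVE error bound together with the membership condition of $\mathcal{M}_{\text{red}}$ (we absorb the scale $\mu_{\text{MVE}}$ into $\lambda$). Summing over the $K$ terms gives $|\mathcal{M}_{\text{red}}|(\lambda+\delta)$.

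\textbf{Step 3: the residual $L_1$-shrunk messages.} The unlearned policy does not send $\mathbf{0}$ on a pruned channel; it sends a small residual $\hat m_i$. \Cref{asm:smoothness}, which I expect to be a smoothness condition on $Q_{\text{tot}}^*$ in the message argument with constant $L$, controls the second bracket. I would apply it in its second-order form around the masked point, so that the first-order term drops or is absorbed and a quadratic remainder $L\|m_i\|^2$ remains for each pruned channel. This produces $L\sum_{m_i\in\mathcal{M}_{\text{red}}}\|m_i\|^2$. Collecting the three contributions and taking the expectation with the $\frac{1}{1-\gamma}$ factor gives the stated bound.

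\textbf{Main obstacle.} The delicate step is Step 2. The CMV in \cref{eq:cmv} is defined for masking a single message in the full context, but the telescoping evaluates it in progressively reduced contexts $\mathbf{m}^{(k-1)}$. Moreover, the MVE guarantee $|V_\phi-\Delta Q|\le\delta$ and the threshold test are stated only at the full $\mathbf{m}$. I would try first to read the MVE error assumption as holding uniformly over message sets reachable during unlearning. This is consistent with the MVE being retrained on the current policy's messages in \cref{alg:mute_full}, and it justifies applying the threshold in every intermediate context. If that is not enough, a fallback is to bound the interaction terms via the same smoothness assumption and fold them into the $L\|m_i\|^2$ term.
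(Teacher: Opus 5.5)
The gap is in Step 2, and your primary fix does not close it. Membership in $\mathcal{M}_{\text{red}}$ is decided by thresholding $V_\phi(\mathbf{m})_i$ at the realized message vector $\mathbf{m}$ only. Suppose you read the MVE assumption as holding uniformly over contexts. You still only get $\Delta Q_{i_k}(\boldsymbol{\tau},\mathbf{m}^{(k-1)}) \le V_\phi(\mathbf{m}^{(k-1)})_{i_k}+\delta$, and nothing bounds the right-hand side by $\lambda+\delta$. Also, in \cref{alg:mute_full} the MVE is trained once in Phase 1 and then frozen; it is not retrained on the unlearning policy's messages.

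The duplicate-message case of \cref{app:duplicate_message} shows your per-step claim is actually false. Both copies pass the threshold at $\mathbf{m}$, but once the first is masked, the CMV of the second in $\mathbf{m}^{(1)}$ is essentially the full value of the shared information.

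The paper never applies the threshold outside $\mathbf{m}$. It rewrites the telescoped sum as $\sum_j \Delta Q_j(\boldsymbol{\tau},\mathbf{m}) + \mathcal{I}(\mathcal{M}_{\text{red}})$ and uses $\Delta Q_j(\boldsymbol{\tau},\mathbf{m})\le\lambda+\delta$ only for the full-context CMVs. It then bounds the interaction term (\cref{lem:interaction}) by Taylor-expanding $f(\mathbf{m})=Q^*_{\text{tot}}(\boldsymbol{\tau},\mathbf{m})$ about the \emph{single} point $\mathbf{m}$. This is done both for each perturbation $\mathbf{v}_i$ (block $-m_i$, zeros elsewhere) and for $\mathbf{v}_{\text{red}}=\sum_i\mathbf{v}_i$. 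The linear terms then cancel exactly, leaving $\mathcal{I}=\sum_i R(\mathbf{v}_i)-R(\mathbf{v}_{\text{red}})$. Block-orthogonality gives $\|\mathbf{v}_{\text{red}}\|^2=\sum_i\|m_i\|^2$, so $|\mathcal{I}|\le \tfrac{L}{2}\sum_i\|m_i\|^2+\tfrac{L}{2}\sum_i\|m_i\|^2 = L\sum_i\|m_i\|^2$. The common expansion point matters. If you bound each telescoped term about its own base point $\mathbf{m}^{(k-1)}$, you are left with gradient-difference cross terms of size $L\|m_{i_k}\|\,\|\mathbf{m}^{(k-1)}-\mathbf{m}\|$, and these do not sum to $L\sum_i\|m_i\|^2$.

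This is also why Step 3 cannot stand. The smoothness term in the statement is exactly the interaction bound above. Once you take your fallback---which you must---none of $L\sum_i\|m_i\|^2$ remains for your extra bracket $Q^*_{\text{tot}}(\boldsymbol{\tau},\mathbf{m}^{(K)})-Q^*_{\text{tot}}(\boldsymbol{\tau},\hat{\mathbf{m}})$. That bracket would not produce this term on its own anyway. Expanding about $\mathbf{m}^{(K)}$, the first-order term $\langle\nabla f(\mathbf{m}^{(K)}),\hat{\mathbf{m}}-\mathbf{m}^{(K)}\rangle$ is linear in the residuals. $L$-smoothness bounds how the gradient varies, not its size, so this term is uncontrolled. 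The remainder also involves $\|\hat m_i\|^2$ rather than $\|m_i\|^2$. The paper avoids the bracket by taking the unlearned policy to act on $\tilde{\mathbf{m}}=\mathbf{m}\setminus\mathcal{M}_{\text{red}}$, i.e.\ pruned channels are exactly masked; adopt that reading and drop Step 3.

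Separately, tidy Step 1. As written you pay $2\epsilon$ to pass from $Q^*_{\text{tot}}$ to $Q_{\text{tot}}$, yet Step 2 continues with $Q^*_{\text{tot}}$. The paper spends one $\epsilon$ replacing $Q_{\text{tot}}$ by $Q^*_{\text{tot}}$ inside the advantage $A_{\boldsymbol{\pi}}$ of the performance difference lemma (taken under $\boldsymbol{\pi}^*$). It spends the other on Term (B), $Q^*_{\text{tot}}(\boldsymbol{\tau},\tilde{\mathbf{m}})-Q_{\text{tot}}(\boldsymbol{\tau},\tilde{\mathbf{m}})$, which links the unlearned policy's value under the reduced messages to the frozen critic's. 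Only after that does the telescoping act on $Q^*_{\text{tot}}$ alone.
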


See~\cref{app:proof} for the detailed proof.
The bounds presented in~\cref{thm:return_gap} rely on the assumption of bounded estimation errors, a standard analytical practice in MARL~\cite{kubatrust,chen2024rgmcomm}. To verify that these theoretical assumptions hold in practice, we tracked the empirical $\|Q^*_{\text{tot}} - Q_{\text{tot}}\|_\infty$ and the MVE loss during training, as detailed in~\cref{app:estimation_errors}.
This empirical evidence confirms that our objective enforces the practical conditions necessary to maintain these bounds.

\section{Experiments}
\label{sec:experiments}

Our experimental design is guided by the following three research questions:
\textbf{RQ1 (Communication Efficiency):} Can MUTE significantly reduce communication bandwidth without degrading the joint return?
\textbf{RQ2 (Selectivity):} Does MUTE correctly identify and prune redundant messages?
\textbf{RQ3 (Mechanism):} Are the MVE and Behavioral Anchoring components necessary for stable unlearning?

\subsection{Experimental Setup}
We implement our proposed algorithm on top of the MASIA~\cite{masia} framework, which we augment with an MLP-based message generator.
We select MASIA as our backbone because it is a value-based CTDE framework that supports unrestricted communication channels.
This structural flexibility makes it an ideal baseline for evaluating the effectiveness of our algorithm.

\textbf{Benchmarks}.
We evaluate MUTE on three benchmarks: Hallway~\cite{ndq}, StarCraft Multi-Agent Challenge (SMAC)~\cite{smac}\footnote{We use StarCraft II version SC2.4.6.2.69232.}, Traffic Junction (TJ)~\cite{ic3net}, SMAC-Communication~\cite{ndq} and SMACv2~\cite{smacv2}.
Hallway is a diagnostic grid-world environment requiring precise temporal coordination, where agents must navigate to designated positions simultaneously to succeed.
Traffic Junction~\cite{ic3net} simulates a traffic control scenario where agents travel along intersecting routes to reach destinations. It challenges agents to negotiate right-of-way and avoid collisions under limited field-of-view, making communication essential for resolving congestion.
SMAC and SMAC-Communication serve as standard testbeds for cooperative micromanagement, featuring high-dimensional combat scenarios in StarCraft II with heterogeneous units.
SMACv2 extends this by introducing stochastic start positions and random unit distributions, designed to prevent strategy memorization and test the agents' ability to generalize to unseen tactical configurations.

\textbf{Baselines.}
To validate the efficacy of MUTE, we compare it against a comprehensive suite of baselines ranging from standard value decomposition methods to advanced communication protocols.
We include QMIX~\cite{qmix} and MAPPO~\cite{mappo} as representative baselines \textit{without} explicit communication.
To evaluate the impact of restricted message exchange, we benchmark against TarMAC~\cite{tarmac}, MAIC~\cite{maic}, MASIA~\cite{masia}, IC3Net~\cite{ic3net} and SMS~\cite{sms}, which are representative MARL communication frameworks.

MUTE\footnote{Code available at \url{https://anonymous.4open.science/r/mute-2540/}} undergoes a three-stage training process: pre-training, MVE training, and communication unlearning, taking 2M, 0.5M, and 1.5M steps, respectively.
For a fair comparison, all baseline models are also trained for 4M steps using their original implementations.
This budget is more than sufficient to guarantee convergence, as most baselines were shown to converge within 2M steps in their original publications.
For evaluation, all results are reported with a 95\% confidence interval across 5 random seeds.

\begin{figure*}[t]
\centering
\includegraphics[width=\linewidth]{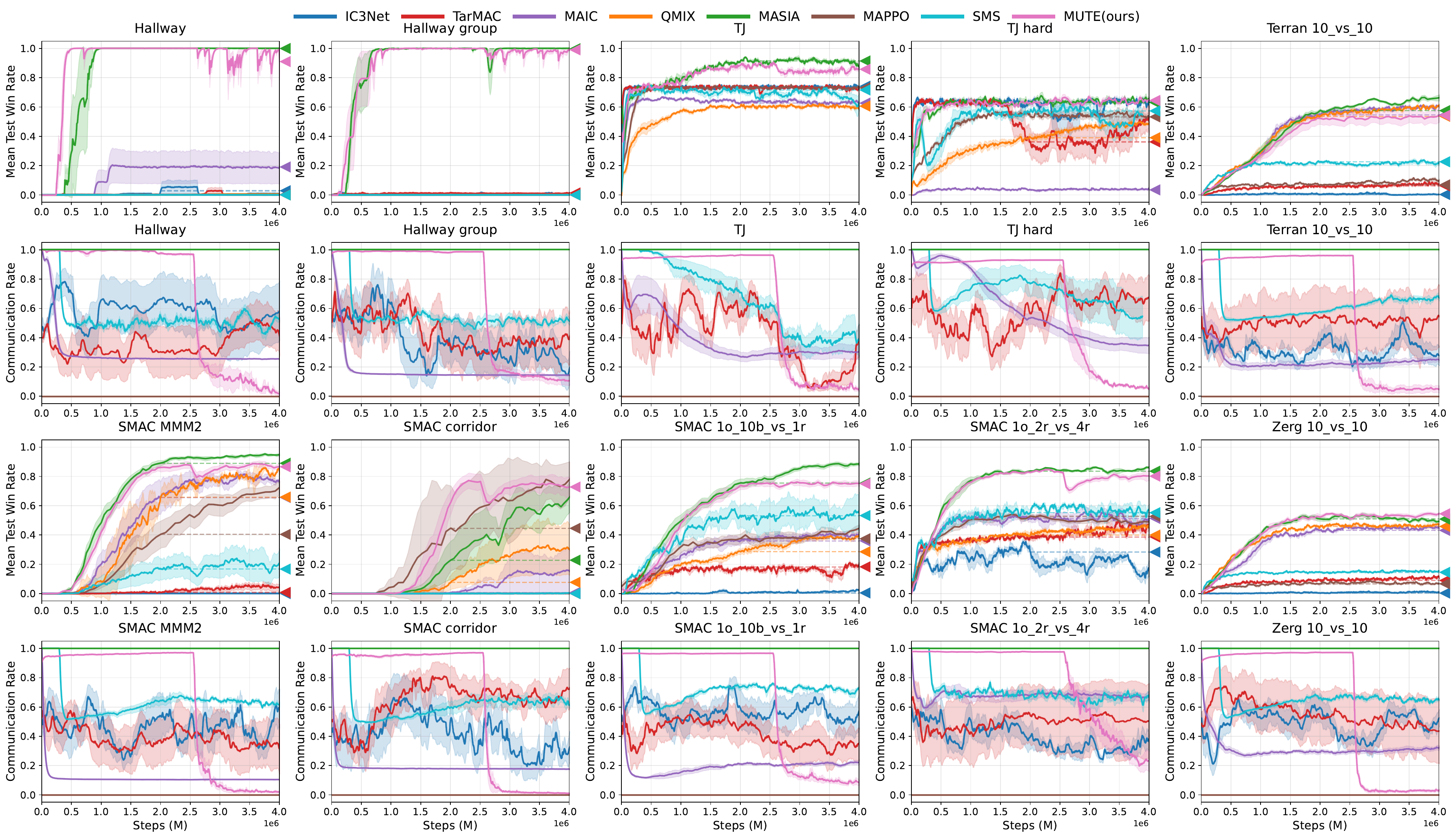}
\caption{Comparison of performance and communication efficiency against baselines across 10 benchmarks. Plots display mean test win rates and communication rates over 4 million steps. \textbf{MUTE follows a three-stage curriculum: MARL pre-training (0--2M steps), MVE training (2--2.5M steps), and communication unlearning (2.5--4M steps).} Triangles on the right axis mark the win rate at 2M steps, serving as a baseline for policy quality before the MVE and unlearning phases commence.}
\label{fig:main_results}
\vspace{-2em}
\end{figure*}

\subsection{RQ1: Communication Efficiency}
\label{sec:exp_rq1_comm}
We evaluate the communication efficiency of MUTE by comparing its converged communication rate and performance preservation against baselines across all benchmarks.
As illustrated in \Cref{fig:main_results}, MUTE consistently achieves the lowest communication overhead among all communication-enabled methods, while maintaining competitive win rates.
To rigorously quantify the performance-communication trade-off, we conduct a normalized performance-communication analysis in \cref{app:normalized_tradeoff}.
By normalizing the performance with $W_{\text{norm}} = \frac{W - W_{\text{base}}}{W_{\text{full}} - W_{\text{base}} + \epsilon}$ , we map the results onto a unified 2D plane.
The resulting scatter plots confirm that MUTE robustly dominates the communication efficiency: it is the only method that reliably achieves maximal normalized performance ($W_{\text{norm}} \ge 1.0$) while maintaining the lowest communication bandwidth.

In the Hallway and Hallway Group environments, only MUTE and MASIA achieve a 100\% win rate.
However, while MASIA requires full communication, MUTE successfully unlearns redundant messages, reducing communication by 99\% in Hallway and 90\% in Hallway Group.

In the TJ and TJ Hard scenarios, MUTE exhibits a similar pattern of efficiency.
As shown in \Cref{fig:main_results}, MUTE aggressively unlearns redundant messages, stabilizing at a communication rate of approximately 5\% while matching the optimal win rates in TJ Hard and achieving performance comparable to MASIA in TJ. %

MUTE also demonstrates superior communication efficiency in SMAC for high-dimensional combat scenarios compared to all baselines.
In the MMM2 map, MUTE matches the optimal win rates 94\% of the full-communication baseline MASIA at 2M steps, yet it has less than $<$5\% communication usage, whereas methods like TarMAC and SMS plateau at 40--60\% bandwidth.
This advantage is even more critical in the hard-exploration Corridor map.
MUTE achieves a high win rate of 73\% at 4M steps, comparable to MAPPO (77\%).
However, it is important to note that MAPPO requires the full 4M steps of MARL training to reach this level, whereas MUTE essentially freezes its policy updates after 2M steps.
As indicated by the triangles at the right side of the plot in \Cref{fig:main_results}, %
MAPPO's win rate is only 42\%  when trained with 2M steps.
More importantly, MUTE sustains this performance with near-zero ($\approx$1\%) communication, which is far more efficient compared to other communicative baselines.

Similarly, in the 1o\_10b\_vs\_1r scenario, MUTE achieves a win rate comparable to MASIA while maintaining a communication rate of 10\%.
In the 1o\_2r\_vs\_4r scenario, MUTE matches the optimal performance of the full-communication baseline MASIA but differentiates itself by reducing communication overhead to $\approx$20\%.

In the Zerg 10\_vs\_10 environment, MUTE outperforms all baselines, achieving a higher win rate 55\% than even the full-communication method MASIA 49\%, while simultaneously dropping its communication rate to near-zero (2\%) after the unlearning phase.
In the Terran 10\_vs\_10 scenario, MUTE remains competitive at the 2M step mark, comparable to the communication-heavy baselines, while utilizing less than 5\% of the available bandwidth.
\begin{figure*}[t]
\centering
\includegraphics[width=\linewidth]{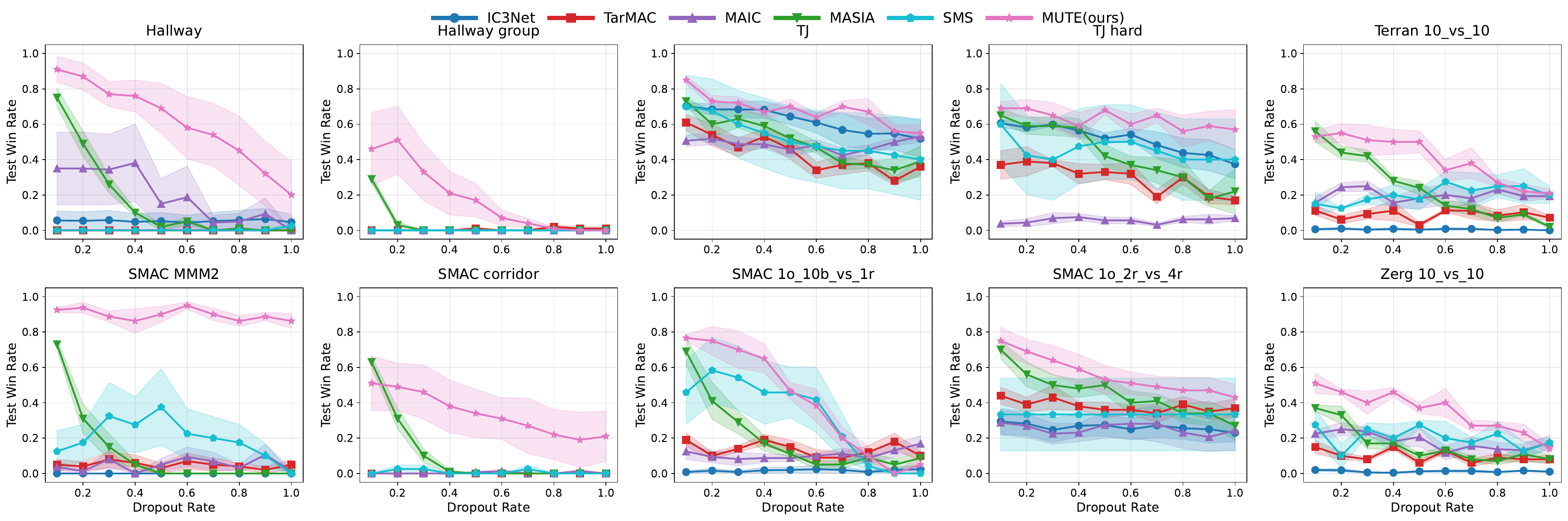}
\caption{Comparison of test win rate under varying message dropout rates against baselines across 10 benchmarks.}
\label{fig:droprate_results}
\vspace{-1em}
\end{figure*}
However, other baselines achieve higher performance at 4M steps.
We attribute this to the nature of SMACv2, which highlights stochastic start positions and random unit distributions rather than explicit coordination.
This environmental randomness allows non-communicative frameworks to achieve higher performance than in other environments, thereby diminishing the relative efficiency of communication-based methods.

\subsection{RQ2: Selectivity}
\label{sec:exp_rq2}
\begin{wraptable}{r}{0.5\linewidth}
    \centering
    \caption{AUC Summary: Area Under the Dropout-WinRate Curve. \colorbox{good}{Green} and \colorbox{bad}{Red} highlights indicate the highest and lowest values, respectively.}
    \label{tab:auc_summary}
    \resizebox{\linewidth}{!}{%
    \tablestyle{1.5pt}{1.2}
    \begin{tabular}{lcccccc}
        \toprule
        Environment & IC3Net & TarMAC & MAIC & MASIA & SMS & MUTE (ours) \\
        \midrule
        Hallway & 0.048 & \cellcolor{bad} 0.000 & 0.178 & 0.131 & 0.001 & \cellcolor{good} 0.554 \\
        Hallway Grp & 0.000 & 0.005 & \cellcolor{bad} 0.000 & 0.018 & \cellcolor{bad} 0.000 & \cellcolor{good} 0.157 \\
        10b\_vs\_1r & \cellcolor{bad} 0.015 & 0.121 & 0.093 & 0.161 & 0.294 & \cellcolor{good} 0.366 \\
        2r\_vs\_4r & 0.236 & 0.341 & \cellcolor{bad} 0.227 & 0.402 & 0.300 & \cellcolor{good} 0.498 \\
        MMM2 & \cellcolor{bad} 0.000 & 0.044 & 0.047 & 0.088 & 0.191 & \cellcolor{good} 0.808 \\
        Corridor & \cellcolor{bad} 0.000 & \cellcolor{bad} 0.000 & 0.003 & 0.074 & 0.008 & \cellcolor{good} 0.302 \\
        Terran 10v10 & \cellcolor{bad} 0.005 & 0.079 & 0.181 & 0.209 & 0.185 & \cellcolor{good} 0.364 \\
        TJ & 0.558 & \cellcolor{bad} 0.386 & 0.433 & 0.446 & 0.468 & \cellcolor{good} 0.609 \\
        TJ Hard & 0.464 & 0.269 & \cellcolor{bad} 0.051 & 0.381 & 0.405 & \cellcolor{good} 0.564 \\
        Zerg 10v10 & \cellcolor{bad} 0.011 & 0.087 & 0.162 & 0.139 & 0.178 & \cellcolor{good} 0.319 \\
        \bottomrule
    \end{tabular}%
    }
    \vspace{-1em}
\end{wraptable}
To address RQ2, we compare the win rate of MUTE across varying message dropout rates (from 10\% to 100\%) on 10 benchmarks against six communication-enabled baselines (see~\cref{fig:droprate_results}).
To quantify this robustness across the entire spectrum of communication scarcity, we calculate the Area Under the Curve (AUC) for the win rate relative to the dropout rate. \cref{tab:auc_summary} presents the summary of these AUC scores, where a higher value indicates superior resilience to information loss.

As quantified in \Cref{tab:auc_summary}, MUTE achieves the highest AUC across all 10 benchmarks, demonstrating that the reserved messages are truly performance-preserving.
In MMM2, MUTE achieves an AUC of 0.808, which is over $4\times$ higher than the best-performing baseline, SMS (0.191).
Similarly, in Hallway, MUTE's AUC (0.554) dwarfs the runner-up MAIC (0.178).
The visual trends in \Cref{fig:droprate_results} reveal distinct decay patterns between MUTE and the baselines:
Although methods like MASIA start with a similar win rate at the beginning, they exhibit a steep negative slope.
For instance, in MMM2, Hallway, Corridor and 1o\_10br\_vs\_4r, their win rates crash significantly as soon as the dropout rate exceeds 20--30\%, suggesting a high dependency on full communication.
In contrast, MUTE exhibits a gentle decay curve. Notably, in MMM2, MUTE maintains a near-optimal win rate ($\approx$90\%) even as the dropout rate approaches 80\%.

\subsection{RQ3: Ablation Studies}
\label{sec:exp_rq3}
Finally, for RQ3, we verify the necessity of our dual-objective mechanism by ablating key components.
We evaluate two variants of MUTE:
\textbf{(1) w/o Anchoring ($\beta=0$):} We minimize only the sparsity loss $\mathcal{L}_{\text{sparse}}$ without the anchoring constraint.
\textbf{(2) w/o MVE:} We replace the value-guided redundancy set $\mathcal{M}_{\text{red}}$ with a random selection of the same sparsity.

In~\cref{fig:ablation_results}, \textit{w/o Anchoring} suffers a catastrophic collapse immediately upon triggering the unlearning process. As the communication rate drops, the test win rate plummets from $\approx$95\% to below 30\%.
This confirms that $\mathcal{L}_{\text{anchor}}$ is essential for safe unlearning. Without the anchoring constraint forcing, the policy fails to internalize the coordination logic, leading to immediate failure when the communication channel is restricted.
\begin{wrapfigure}{r}{0.45\linewidth}
\centering
\includegraphics[width=\linewidth]{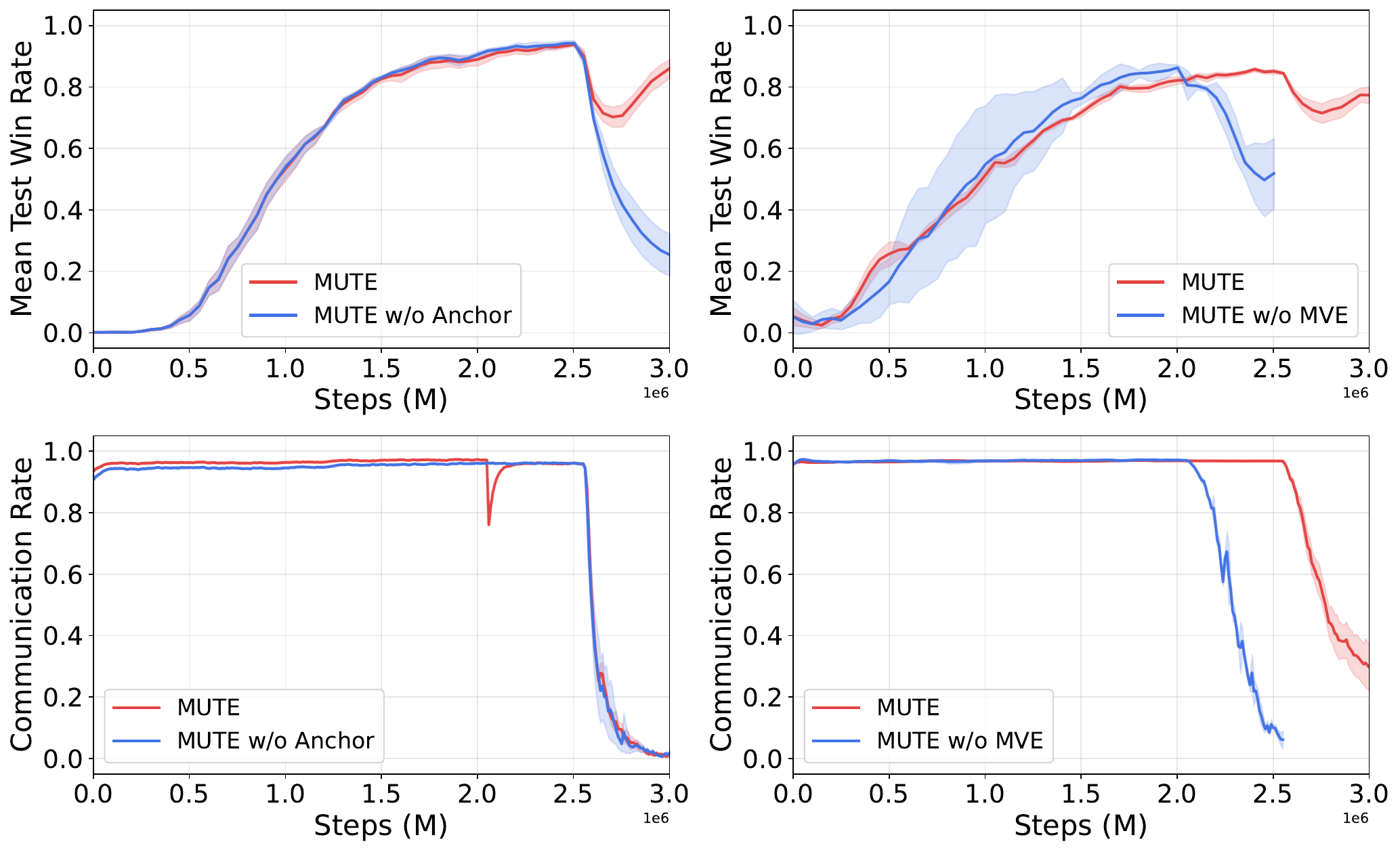}
\caption{Ablation study of \textit{w/o Anchoring} on SMAC MMM2 (left) and \textit{w/o MVE} on 1o\_10b\_vs\_1r (right).}
\label{fig:ablation_results}
\vspace{-1em}
\end{wrapfigure}
The \textit{w/o MVE} variant was pre-trained for 2M steps and 0.5M steps for unlearning; while MUTE trained for extra 0.5M steps for MVE, that causes the curve to shift by 0.5M steps.
\textit{w/o MVE} fails to maintain performance as sparsity increases. While MUTE maintains a win rate of $\approx$80\% even after the communication rate drops significantly, the win rate of \textit{w/o MVE} crashes to $\approx$50\%.
This validates the Value-Awareness of the MVE module. Randomly removing messages---even at the same sparsity level---inevitably cuts critical information. MVE ensures that only redundant messages are unlearned, preserving the critical coordination signals.

We present additional ablation experiments on hyperparameters (e.g., $\lambda$, $\beta$, and the norm constraints of $\mathcal{L}_{\text{sparse}}$) in~\cref{app:hyper_ablation}. 

Furthermore, \cref{app:budget_ablation} provides a budget-sensitivity analysis of our three-stage training schedule, confirming that our allocation was not chosen arbitrarily, but rather optimized to prevent premature unlearning under a fixed 4M-step budget.

\subsection{Analysis}

\textbf{Relation to Learned Empty Messages.}
A natural question is whether, without unlearning, standard MARL agents can simply learn to output empty messages if provided with gating mechanisms or direct communication penalties.
As we detail in \cref{app:nullmsg}, standard MARL struggles to reduce communication bandwidth purely through environmental reward feedback due to weakly attributable team rewards.
Empirical comparisons under identical training budgets demonstrate that such baselines either fail to meaningfully reduce communication or suffer severe coordination degradation, confirming the necessity of MUTE's unlearning approach.

\textbf{Computational Cost and Scalability.} A comprehensive profiling of MUTE's computational overhead---including training wall-clock time, inference FLOPs, and scalability across agent counts and hyperparameters---is provided in \cref{app:comp_profiling}.

\section{Conclusion}
We propose \textbf{MUTE}, a value-based unlearning framework. Theoretically grounded, it retains near-optimal SMAC performance even with 80--90\% communication pruning.

\textbf{Limitations.}
MUTE uses CMV as a tractable local estimate of a message's marginal contribution to the joint return. This design avoids the exponential cost of subset-based attribution methods such as Shapley value estimation, but it does not explicitly enumerate all possible message subsets. As a result, higher-order interactions among messages may not be fully captured by the MVE alone. In MUTE, this limitation is mitigated by behavioral anchoring, which provides feedback during unlearning when suppressing candidate messages would alter the behavior of the full-communication expert. Nevertheless, extending CMV to subset-aware or interaction-aware estimation could provide stronger theoretical guarantees and is an important direction for future work.

\clearpage
\section*{Acknowledgments}
This research is partially supported by the Air Force Office of Scientific Research (AFOSR), under contract FA9550-24-1-0078, and NSF award CNS-2148253. The paper was received and approved for public release by Air Force Research Laboratory (AFRL) on May 19th 2026, case number AFRL-2026-2461. Any opinions, findings, and conclusions or recommendations expressed in this material are those of the authors and do not necessarily reflect the views of AFRL or its contractors.

\bibliographystyle{unsrtnat}
\bibliography{main}

\newpage
\appendix
\onecolumn %
\crefalias{section}{appendix}
\addtocontents{toc}{\protect\setcounter{tocdepth}{2}}

\begin{center}
    \Large\bfseries Appendix Table of Contents
\end{center}
\vspace{1em}

\tableofcontents
\vspace{2em} %

\clearpage
\section{Proof of \texorpdfstring{\cref{thm:return_gap}}{Theorem \ref*{thm:return_gap}}}
\label{app:proof}
\ReturnGapThm*

\begin{proof}
We analyze the performance gap using the \textbf{Performance Difference Lemma}~\cite{kakade2002approximately}.
\small{
\begin{align}
    &J(\boldsymbol{\pi}^*) - J(\boldsymbol{\pi})
    = \frac{1}{1-\gamma} \, \mathbb{E}_{(\btau, \mathbf{a}) \sim \boldsymbol{\pi^*}} \left[ A_{\boldsymbol{\pi}}(\btau, \mathbf{a}) \right] \nonumber \\
    &= \frac{1}{1-\gamma} \, \mathbb{E}_{(\btau, \mathbf{a}) \sim \boldsymbol{\pi}^*} \left[Q_{\text{tot}}(\btau, \mathbf{a}) - V(\btau) \right] \nonumber \\
    &\leq \frac{1}{1-\gamma} \, \mathbb{E}_{(\btau, \mathbf{a}) \sim \boldsymbol{\pi}^*} \left[Q^*_{\text{tot}}(\btau, \mathbf{a}) + \epsilon - V(\btau) \right] \nonumber \\
    &= \scalebox{0.9}{$\displaystyle \frac{1}{1-\gamma} \, \mathbb{E}_{(\btau, \mathbf{a}) \sim \boldsymbol{\pi}^*} \bigg[Q^*_{\text{tot}}(\btau, \mathbf{a}) + \epsilon - \mathbb{E}_{\mathbf{a}^\prime \sim \boldsymbol{\pi}(\cdot|\btau)} \left[ Q_{\text{tot}}(\btau, \mathbf{a}^\prime)\right]\bigg]$} \nonumber \\
    &= \frac{1}{1-\gamma} \, \mathbb{E}_{\substack{(\btau, \mathbf{a}) \sim \boldsymbol{\pi}^* \\ \mathbf{a}^\prime \sim \boldsymbol{\pi}(\cdot|\btau)}} \left[Q^*_{\text{tot}}(\btau, \mathbf{m}) - Q_{\text{tot}}(\btau, \tilde{\mathbf{m}}) + \epsilon\right]
    \label{eq:gap_decomposition}
\end{align}
}
Let $\Delta \bar{Q} = Q_{\text{tot}}^*(\btau, \mathbf{m}) - Q_{\text{tot}}(\btau, \tilde{\mathbf{m}})$. We introduce an intermediate term $Q_{\text{tot}}^*(\boldsymbol{\tau}, \tilde{\mathbf{m}})$ to decompose the gap:
\begin{equation}
    \Delta \Bar{Q} = \underbrace{Q_{\text{tot}}^*(\boldsymbol{\tau}, \mathbf{m}) - Q_{\text{tot}}^*(\boldsymbol{\tau}, \tilde{\mathbf{m}})}_{\text{(A) Information Loss}} + \underbrace{Q_{\text{tot}}^*(\boldsymbol{\tau}, \tilde{\mathbf{m}}) - Q_{\text{tot}}(\boldsymbol{\tau}, \tilde{\mathbf{m}})}_{\text{(B) Approximation Error}}
\end{equation}

\textbf{Bounding Term (B) [Approximation Error]:}
This term represents the inability of $Q_{\text{tot}}$ to perfectly mimic $Q_{\text{tot}}^*$ under the sparse context, which the Anchoring Loss minimizes. Given the assumption $\|Q_{\text{tot}}^* - Q_{\text{tot}}\| \le \epsilon$, we have: $\mathbb{E}[\text{Term B}] \le \epsilon$.

\textbf{Bounding Term (A) [Information Loss]:}
This term captures the value lost by masking messages. 
Let $K = |\mathcal{M}_{\text{red}}|$ be the total number of suppressed messages. Without loss of generality, we index the messages in this subset as $\mathcal{M}_{\text{red}} = \{m_1, \dots, m_K\}$.
Term A can be exactly expanded via a telescoping sum:
\begin{equation}
\begin{split}
\text{Term (A)} &= Q_{\text{tot}}^*(\btau, \mathbf{m}) - Q_{\text{tot}}^*(\btau, \mathbf{m} \setminus \mathcal{M}_{\text{red}}) \\
&= \sum_{j=1}^K \left[ Q_{\text{tot}}^*(\btau, \mathbf{m} \setminus \mathcal{M}_{<j}) - Q_{\text{tot}}^*(\btau, \mathbf{m} \setminus \mathcal{M}_{\le j}) \right] \\
&= \sum_{j=1}^K \Delta Q_j(\btau, \mathbf{m} \setminus \mathcal{M}_{<j})
\end{split}
\end{equation}
where $\mathcal{M}_{<j} = \{m_1, \dots, m_{j-1}\}$ denotes the subset of the first $j-1$ messages removed (with $\mathcal{M}_{<1} = \emptyset$).
We can relate this to the base individual marginal drops $\Delta Q_j(\btau, \mathbf{m})$ by isolating an explicit interaction term $\mathcal{I}(\mathcal{M}_{\text{red}})$:
\begin{equation}
\text{Term (A)} = \sum_{j=1}^K \Delta Q_j(\btau, \mathbf{m}) + \mathcal{I}(\mathcal{M}_{\text{red}})
\end{equation}
where $\mathcal{I}(\mathcal{M}_{\text{red}}) = \sum_{j=1}^K \left[ \Delta Q_j(\btau, \mathbf{m} \setminus \mathcal{M}_{<j}) - \Delta Q_j(\btau, \mathbf{m}) \right]$ represents the additive information loss caused by redundant message interactions.

Since $\Delta Q_j \le \lambda + \delta$, applying~\cref{lem:interaction} yields:
\begin{equation}
\text{Term (A)} \le |\mathcal{M}_{\text{red}}|(\lambda + \delta) +  L \sum_{m_i \in \mathcal{M}_{\text{red}}} \|m_i\|^2
\end{equation}

Substituting this interaction bound, the rigorously corrected theoretical performance gap is:
\begin{equation}
J(\boldsymbol{\pi}^*) - J(\boldsymbol{\pi}) \le \frac{1}{1-\gamma} \mathbb{E} \Bigg[ 2\epsilon + |\mathcal{M}_{\text{red}}| (\lambda + \delta) +  L \sum_{m_i \in \mathcal{M}_{\text{red}}} \|m_i\|^2 \Bigg]
\end{equation}
\end{proof}

\begin{assumption} [$L$-Smoothness of the Value Function]
\label{asm:smoothness}
Let $\mathbf{m} = [m_1^\top, m_2^\top, \dots, m_N^\top]^\top \in \mathbb{R}^{N \times d_{\text{obs}}}$ be the concatenated continuous message vector of all agents. We assume the expected joint action-value function, $f(\mathbf{m}) := Q_{\text{tot}}^*(\btau, \mathbf{m})$, is continuously differentiable and $L$-smooth with respect to the continuous message embedding $\mathbf{m}$. This implies its gradient $\nabla f(\mathbf{m})$ is $L$-Lipschitz continuous. A fundamental property of $L$-smooth functions is that they can be bounded by a quadratic Taylor expansion. For any perturbation vector $\mathbf{v}$:
\begin{equation}
f(\mathbf{m} + \mathbf{v}) = f(\mathbf{m}) + \langle \nabla f(\mathbf{m}), \mathbf{v} \rangle + R(\mathbf{v})
\end{equation}
where the non-linear remainder term is strictly bounded by $|R(\mathbf{v})| \le \frac{L}{2} \|\mathbf{v}\|^2$.
\end{assumption}

\begin{lemma} [Bound on the Interaction Term]
\label{lem:interaction}
Let $\mathcal{M}_{\text{red}}$ be the set of redundant messages to be masked. The interaction term is bounded by:
\begin{equation}
\mathcal{I}(\mathcal{M}_{\text{red}}) \le L \sum_{m_i \in \mathcal{M}_{\text{red}}} \|m_i\|^2
\end{equation}
\end{lemma}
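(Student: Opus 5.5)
We will prove \cref{lem:interaction} by treating masking as a perturbation of the continuous message vector and then applying the quadratic expansion of \cref{asm:smoothness} to each summand of $\mathcal{I}(\mathcal{M}_{\text{red}})$. Masking message $m_j$ means replacing its block by zero, which is the perturbation $\mathbf{v}_j := -E_j m_j$. Here $E_j$ embeds agent $j$'s block into the concatenated vector, so $\|\mathbf{v}_j\| = \|m_j\|$ and the blocks of distinct agents are orthogonal. With $f(\mathbf{m}) = Q_{\text{tot}}^*(\btau,\mathbf{m})$, set $\mathbf{w}_j := \sum_{k<j}\mathbf{v}_k$, which represents $\mathbf{m}\setminus\mathcal{M}_{<j}$. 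The $j$-th summand of the interaction term can then be written as a second-order finite difference:
\begin{equation*}
\Delta Q_j(\btau,\mathbf{m}\setminus\mathcal{M}_{<j}) - \Delta Q_j(\btau,\mathbf{m}) = -\big[f(\mathbf{m}+\mathbf{w}_j+\mathbf{v}_j) - f(\mathbf{m}+\mathbf{w}_j) - f(\mathbf{m}+\mathbf{v}_j) + f(\mathbf{m})\big].
\end{equation*}

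\textbf{Step 1: bound each mixed difference.} We expand each of the three shifted values around $\mathbf{m}$ using \cref{asm:smoothness}. The linear terms cancel exactly, because $\langle \nabla f,\mathbf{w}_j+\mathbf{v}_j\rangle - \langle\nabla f,\mathbf{w}_j\rangle - \langle\nabla f,\mathbf{v}_j\rangle = 0$. What remains is $R(\mathbf{w}_j+\mathbf{v}_j) - R(\mathbf{w}_j) - R(\mathbf{v}_j)$. A cruder route would bound this by $\tfrac{L}{2}(\|\mathbf{w}_j+\mathbf{v}_j\|^2 + \|\mathbf{w}_j\|^2 + \|\mathbf{v}_j\|^2)$, but that produces terms quadratic in $\|\mathbf{w}_j\|$ and they telescope badly. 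The tighter route uses the Lipschitz gradient directly: write the mixed difference as $\int_0^1 \langle \nabla f(\mathbf{m}+\mathbf{w}_j+s\mathbf{v}_j) - \nabla f(\mathbf{m}+s\mathbf{v}_j), \mathbf{v}_j\rangle\,ds$. This gives $|\cdot| \le L\|\mathbf{w}_j\|\,\|\mathbf{v}_j\|$.

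\textbf{Step 2: sum over $j$.} By block orthogonality, $\|\mathbf{w}_j\| = (\sum_{k<j}\|m_k\|^2)^{1/2}$. Summing the Step 1 bound gives $\mathcal{I} \le L\sum_j \|m_j\|\,\|\mathbf{w}_j\|$. We convert this to $L\sum_j\|m_j\|^2$ using $ab\le \tfrac12(a^2+b^2)$ together with the telescoping identity $\sum_j \|m_j\|\,\|\mathbf{w}_j\| \le \tfrac12\big(\|\sum_k \mathbf{v}_k\|^2 - \sum_k\|\mathbf{v}_k\|^2\big) + (\text{correction})$. More directly, the sum of mixed differences over $j$ equals the total second-order deviation $f(\mathbf{m}+\mathbf{W}) - f(\mathbf{m}) - \sum_j[f(\mathbf{m}+\mathbf{v}_j) - f(\mathbf{m})]$, where $\mathbf{W} = \sum_j \mathbf{v}_j$. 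Expanding this with \cref{asm:smoothness} cancels the linear terms and leaves $R(\mathbf{W}) - \sum_j R(\mathbf{v}_j)$. In absolute value this is at most $\tfrac{L}{2}\|\mathbf{W}\|^2 + \tfrac{L}{2}\sum_j\|\mathbf{v}_j\|^2 = L\sum_j\|m_j\|^2$, since orthogonality gives $\|\mathbf{W}\|^2 = \sum_j \|m_j\|^2$. This global telescoped route is the one we will actually use, since it yields the stated constant exactly. Step 1 is retained only as intuition.

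\textbf{Main obstacle.} The difficulty is the modeling step rather than the algebra. The lemma requires that $\Delta Q_j$ be identified with a difference of the function $f$ of the continuous message vector, evaluated at vectors whose $j$-th block is zeroed. In \cref{eq:cmv}, however, $\Delta Q_j$ is defined through resampled actions. We will resolve this by taking $Q_{\text{tot}}^*(\btau,\mathbf{m})$ to denote the expected value under the induced action distribution, as \cref{asm:smoothness} already does. We will also identify ``masking'' with setting the block to zero. Orthogonality of the agent blocks is the other ingredient that must be stated explicitly, because it is what makes $\|\mathbf{W}\|^2 = \sum_j\|m_j\|^2$ hold.
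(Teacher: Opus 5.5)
Your proposal is correct, and the route you actually commit to---telescoping the per-step differences into $R(\mathbf{W}) - \sum_j R(\mathbf{v}_j)$ once the linear terms cancel, bounding each remainder by $\tfrac{L}{2}\|\cdot\|^2$, and using block orthogonality to get $\|\mathbf{W}\|^2 = \sum_j \|m_j\|^2$---is exactly the paper's argument; it adds one useful step, namely explicitly reconciling the telescoped definition of $\mathcal{I}(\mathcal{M}_{\text{red}})$ in the theorem's proof with the ``total drop minus individual drops'' form that the lemma's proof starts from. You should delete the Step~1 estimate and the unfinished ``$ab \le \tfrac12(a^2+b^2)$ plus (correction)'' sentence: summed over $j$, the bound $L\|\mathbf{w}_j\|\,\|\mathbf{v}_j\|$ grows like $K^{3/2}$ for equal-norm messages, so it cannot deliver the stated constant.
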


\begin{proof}
Masking a single message $m_i$ is equivalent to applying a perturbation vector $\mathbf{v}_i$, where the $i$-th block of the vector is $-m_i$ and all other blocks are $\mathbf{0}$. Note that the squared norm of this perturbation is exactly the squared norm of the message: $\|\mathbf{v}_i\|^2 = \|m_i\|^2$.    
Masking all messages in $\mathcal{M}_{\text{red}}$ simultaneously corresponds to the joint perturbation $\mathbf{v}_{\text{red}} = \sum_{m_i \in \mathcal{M}_{\text{red}}} \mathbf{v}_i$.

The interaction term is defined:
\begin{equation}
\mathcal{I}(\mathcal{M}_{\text{red}}) = \Big( Q_{\text{tot}}^*(\btau, \mathbf{m}) - Q_{\text{tot}}^*(\btau, \mathbf{m} \setminus \mathcal{M}_{\text{red}}) \Big) - \sum_{j=1}^K \Delta Q_j(\btau, \mathbf{m})
\end{equation}

Applying the Taylor expansion to both the joint perturbation $\mathbf{v}_{\text{red}}$ and the individual perturbations $\mathbf{v}_i$ yields:
\begin{equation}
\mathcal{I}(\mathcal{M}_{\text{red}}) = \Big( -\langle \nabla f(\mathbf{m}), \mathbf{v}_{\text{red}} \rangle - R(\mathbf{v}_{\text{red}}) \Big) - \sum_{m_i \in \mathcal{M}_{\text{red}}} \Big( -\langle \nabla f(\mathbf{m}), \mathbf{v}_i \rangle - R(\mathbf{v}_i) \Big)
\end{equation}
Because $\langle \nabla f(\mathbf{m}), \mathbf{v}_{\text{red}} \rangle = \sum_{m_i \in \mathcal{M}_{\text{red}}} \langle \nabla f(\mathbf{m}), \mathbf{v}_i \rangle$ by linearity,
\begin{equation}
\mathcal{I}(\mathcal{M}_{\text{red}}) = - R(\mathbf{v}_{\text{red}}) + \sum_{m_i \in \mathcal{M}_{\text{red}}} R(\mathbf{v}_i)
\end{equation}

We apply the triangle inequality to bound the absolute value of the interaction term:
\begin{equation}
|\mathcal{I}(\mathcal{M}_{\text{red}})| \le |R(\mathbf{v}_{\text{red}})| + \sum_{m_i \in \mathcal{M}_{\text{red}}} |R(\mathbf{v}_i)| \le \frac{L}{2} \|\mathbf{v}_{\text{red}}\|^2 + \sum_{m_i \in \mathcal{M}_{\text{red}}} \frac{L}{2} \|m_i\|^2
\end{equation}

Crucially, because the individual perturbations $\mathbf{v}_i$ modify disjoint, orthogonal blocks in the concatenated vector, the cross-terms in the squared norm are zero ($\langle \mathbf{v}_i, \mathbf{v}_j \rangle = 0$ for $i \neq j$). Thus, the squared norm of the joint perturbation is the sum of the squared individual norms:
\begin{equation}
\|\mathbf{v}_{\text{red}}\|^2 = \Big\| \sum_{m_i \in \mathcal{M}_{\text{red}}} \mathbf{v}_i \Big\|^2 = \sum_{m_i \in \mathcal{M}_{\text{red}}} \|\mathbf{v}_i\|^2 = \sum_{m_i \in \mathcal{M}_{\text{red}}} \|m_i\|^2
\end{equation}

Substituting this equivalence back into our bound yields the final result:
\begin{equation}
|\mathcal{I}(\mathcal{M}_{\text{red}})| \le \frac{L}{2} \sum_{m_i \in \mathcal{M}_{\text{red}}} \|m_i\|^2 + \frac{L}{2} \sum_{m_i \in \mathcal{M}_{\text{red}}} \|m_i\|^2 = L \sum_{m_i \in \mathcal{M}_{\text{red}}} \|m_i\|^2
\end{equation}
\end{proof}

\section{Complete Algorithm for MUTE}
\label{app:alg}
\begin{algorithm}[H]
    \small
   \caption{MUTE: Message Unlearning for Targeted Efficiency}
   \label{alg:mute_full}
\begin{algorithmic}[1]
   \STATE {\bfseries Input:} Pre-trained converged Policy $\boldsymbol{\pi}^*$ and Critic $Q_{\text{tot}}^*$, Replay Buffer $\mathcal{D}$.
   \STATE {\bfseries Hyperparameters:} Sparsity threshold $\lambda$, Anchor coefficient $\beta$.
   \STATE {\bfseries Initialize:} Message Value Estimator $V_\phi$, Policy $\boldsymbol{\pi}_{\theta, \psi}$ (copy from $\boldsymbol{\pi}^*$).
   \STATE \rule{\linewidth}{0.4pt}
   \STATE \textbf{Phase 1: Counterfactual Message Value Estimation}
   \FOR{iteration $\leftarrow 1$ {\bfseries to} $K_{\text{mve}}$}
       \STATE Sample batch $(\boldsymbol{\tau}, \mathbf{m}, \mathbf{a}, r, \boldsymbol{\tau}')$ from $\mathcal{D}$.
       \FOR{each agent $i$}
           \STATE \textit{// Counterfactual resampling: mask $m_i$ and resample teammates}
           \STATE $\tilde{\mathbf{a}}_{-i} \sim \boldsymbol{\pi}^*_{-i}(\cdot \mid \boldsymbol{\tau}_{-i}, \mathbf{m} \setminus \{m_i\})$.
           \STATE $y_i \leftarrow Q_{\text{tot}}^*(\boldsymbol{\tau}, \mathbf{a}) - Q_{\text{tot}}^*(\boldsymbol{\tau}, a_i, \tilde{\mathbf{a}}_{-i})$.
           \STATE $\hat{v}_i \leftarrow V_\phi(m_i)$.
       \ENDFOR
       \STATE Take a gradient step to update $\phi$ by minimizing $\mathcal{L}_{\text{MVE}} = \frac{1}{N} \sum_{i=1}^N (\hat{v}_i - \text{sg}[y_i])^2$.
   \ENDFOR

   \STATE \rule{\linewidth}{0.4pt}
   \STATE \textbf{Phase 2: Value-Guided Message Unlearning}
   \FOR{episode $\leftarrow 1$ {\bfseries to} $M_{\text{train}}$}
       \STATE Collect trajectory $\boldsymbol{\tau}$ using current policy $\boldsymbol{\pi}_{\theta, \psi}$.
       \FOR{each timestep $t$ in batch}
           \STATE Generate messages $\mathbf{m}^t = f_\psi(\boldsymbol{\tau}^t)$ and estimates $\hat{\mathbf{v}}^t = V_\phi(\mathbf{m}^t)$.
           \STATE Identify redundant set: $\mathcal{M}_{\text{red}} = \{ m_i^t \mid \hat{v}_i^t \le \lambda \}$.

           \STATE \textbf{1. Sparsity Loss:} $\mathcal{L}_{\text{sparse}} = \sum_{m \in \mathcal{M}_{\text{red}}} \| m \|_1$.
           \STATE \textbf{2. Anchor Loss:} 
           \STATE \hspace{1em} $\mathcal{L}_{\text{anchor}} = \sum_{i=1}^N \| Q_i^*(\tau_i^t, a_i^t, \mathbf{m}^t_{-i}) - Q_i(\tau_i^t, a_i^t, \mathbf{m}^t_{-i}; \theta) \|^2_2$.
           \STATE Take a gradient step to update $\theta, \psi$ by minimizing $\mathcal{L}(\theta, \psi) = \mathcal{L}_{\text{sparse}} + \beta \mathcal{L}_{\text{anchor}}$.
       \ENDFOR
   \ENDFOR
\end{algorithmic}
\end{algorithm}

\section{Environments Details}
\label{app:env} %

\begin{figure}[H]
    \centering
    \begin{subfigure}[b]{0.48\textwidth}
        \centering
        \includegraphics[height=3cm]{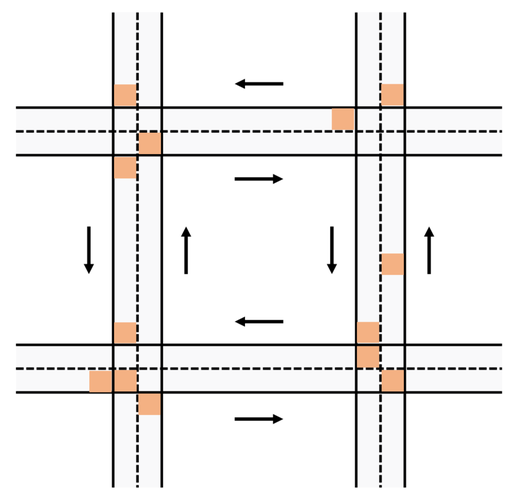}
        \caption{Traffic Junction}
    \end{subfigure}
    \hfill
    \begin{subfigure}[b]{0.48\textwidth}
        \centering
        \includegraphics[height=3cm]{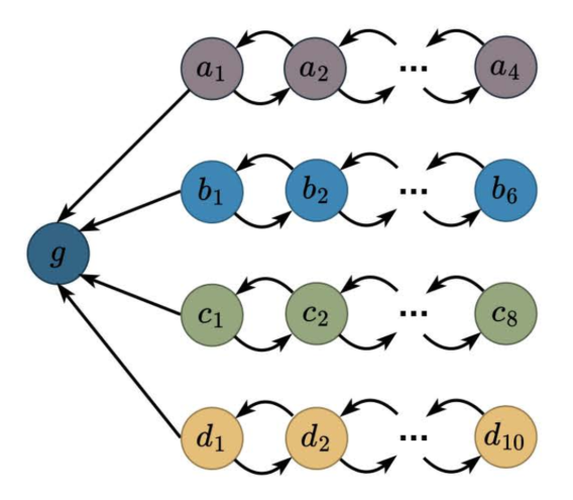}
        \caption{Hallway}
    \end{subfigure}

    \vspace{0.2cm} %

    \begin{subfigure}[b]{0.32\textwidth}
        \centering
        \includegraphics[height=2.5cm]{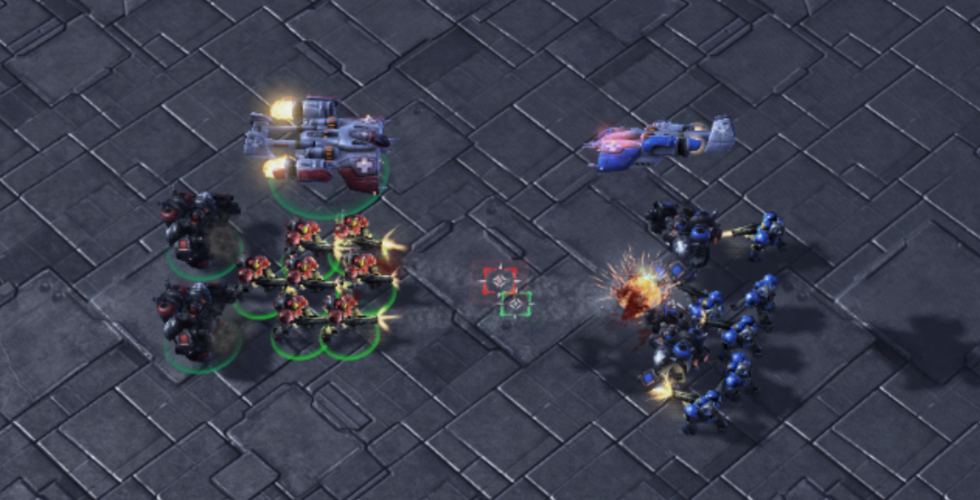}
        \caption{SMAC}
    \end{subfigure}
    \hfill
    \begin{subfigure}[b]{0.32\textwidth}
        \centering
        \includegraphics[height=2.5cm]{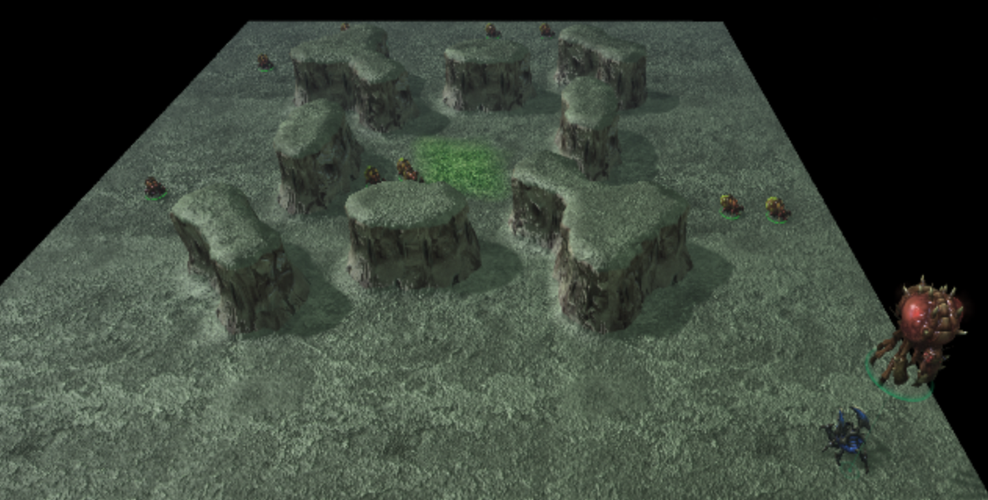}
        \caption{SMAC-Communication}
    \end{subfigure}
    \hfill
    \begin{subfigure}[b]{0.32\textwidth}
        \centering
        \includegraphics[height=2.5cm]{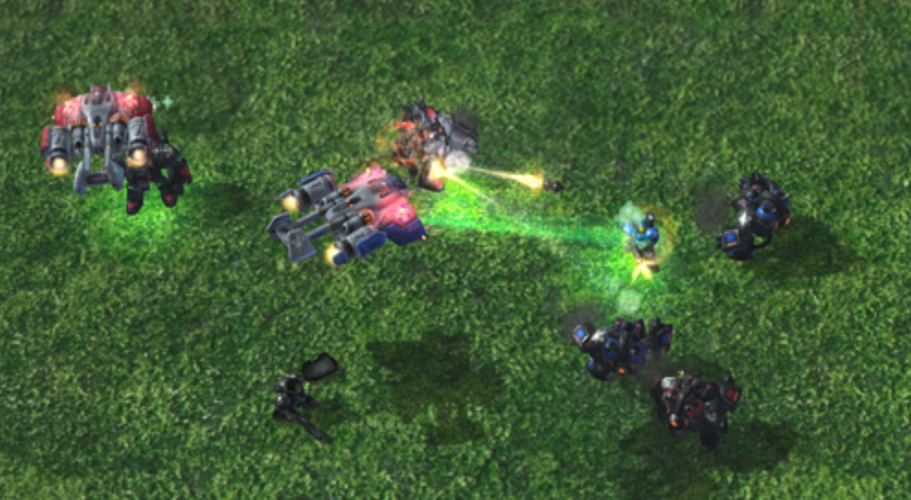}
        \caption{SMACv2}
    \end{subfigure}

    \caption{Experimental environments used in this work.}
    \label{fig:environments}
\end{figure}

We evaluate our method across four diverse cooperative multi-agent domains with varying coordination requirements and observability constraints.

\subsection{StarCraft Multi-Agent Challenge (SMAC)}

SMAC~\cite{smac} and SMAC-Communication~\cite{ndq} provides a set of cooperative micromanagement scenarios in StarCraft II where agents must coordinate to defeat enemy units controlled by the built-in game AI. Each agent controls a single unit with local observations limited to a sight range of 9 units, including ally and enemy health, positions, and unit types within view. Agents can move in four cardinal directions, stop, or attack enemies within range (shoot range of 6 units). The team receives a shared reward based on damage dealt and eliminating enemies, with a bonus reward of 200 for winning the battle.

We evaluate on the following scenarios:
\begin{itemize}
    \item \textbf{MMM2}: A heterogeneous battle featuring 1 Medivac, 2 Marauders, and 7 Marines against the same enemy composition. The Medivac provides healing support but cannot attack, requiring agents to learn to protect it while coordinating offensive actions. This scenario demands role-based coordination between healers and combat units.

    \item \textbf{Corridor}: A super-hard scenario where 6 Zealots must defeat 24 Zerglings in a narrow corridor. The confined space creates complex movement dependencies, and agents must coordinate their positions to avoid blocking each other while maximizing damage output.
    
    \item \textbf{1o\_10b\_vs\_1r}: An asymmetric micromanagement task where 1 Overseer (detector unit without attack capability) and 10 Banelings (suicide bomber units) must defeat a single Roach. Banelings deal splash damage upon death, requiring precise timing and positioning to maximize damage while avoiding friendly fire.
    
    \item \textbf{1o\_2r\_vs\_4r}: 1 Overseer and 2 Roaches versus 4 enemy Roaches. The numerical disadvantage requires optimal focus-fire strategies and kiting maneuvers to overcome the enemy's superior firepower.

\end{itemize}

\subsection{SMACv2}

SMACv2~\cite{smacv2} extends SMAC with procedurally generated scenarios featuring randomized unit compositions, starting positions, and team configurations. Unlike the fixed scenarios in SMAC, SMACv2 evaluates generalization capabilities by randomizing both ally and enemy team compositions at the start of each episode. Units are sampled from weighted distributions, and starting positions follow a ``surrounded and reflect'' distribution where teams may start in encircled configurations.

We evaluate on three faction-specific scenarios:
\begin{itemize}
    \item \textbf{Terran 10v10}: 10 allied Terran units versus 10 enemy units, with unit types sampled from Marines (45\%), Marauders (45\%), and Medivacs (10\%). Marines provide ranged DPS, Marauders offer tankier ranged support with bonus damage against armored units, and Medivacs heal nearby allies.
    
    \item \textbf{Zerg 10v10}: 10 allied Zerg units versus 10 enemies, sampled from Zerglings (45\%), Hydralisks (45\%), and Banelings (10\%). Zerglings are fast melee units, Hydralisks provide ranged attacks, and Banelings are explosive suicide units requiring careful positioning.
\end{itemize}

\subsection{Hallway}

The Hallway environment~\cite{ndq} is a coordination benchmark where $n$ agents navigate one-dimensional corridors of varying lengths. Each agent $i$ occupies a discrete position in a hallway of length $L_i$ and can choose from three actions: stay, move left (decrease position), or move right (increase position). The team receives a reward only when \textit{all} agents simultaneously reach position 0. If any agent reaches position 0 while others have not, the episode terminates with zero reward. Each agent observes only its own position, requiring implicit coordination through communication.

\begin{itemize}
    \item \textbf{Hallway (4 agents)}: Four agents with corridor lengths $[4, 6, 8, 10]$. Agents must learn to synchronize their movements despite different travel distances, with faster agents learning to wait for slower ones. The episode limit is set to $\max(L_i) + 10 = 20$ steps.
    
    \item \textbf{Hallway-Group (7 agents)}: Seven agents divided into two groups: Group 0 (3 agents, corridor lengths $[3, 5, 7]$) and Group 1 (4 agents, corridor lengths $[4, 6, 8, 10]$). Groups must reach position 0 \textit{sequentially}---if both groups complete simultaneously, a penalty is incurred and the episode continues. This creates a hierarchical coordination problem where agents must first coordinate within their group, then coordinate the timing between groups.
\end{itemize}

\subsection{Traffic Junction}

Traffic Junction~\cite{ic3net} simulates a multi-agent traffic coordination problem where autonomous vehicles must navigate through intersections without collisions. Cars spawn probabilistically at entry points and follow predetermined routes through the junction. Each car can either accelerate (GAS: move forward one cell) or brake (BRAKE: stay in place). Agents receive a timestep penalty of $-0.01$ per step while active and a collision penalty of $-10$ when two cars occupy the same cell. The episode succeeds when all cars reach their destinations without collisions.

\begin{itemize}
    \item \textbf{Traffic Junction (Medium)}: A $14 \times 14$ grid with a 4-way intersection and 10 cars. Cars spawn with probability in range $[0.05, 0.2]$ and must navigate through the central junction. Each car has zero vision (observes only its own state), requiring communication to avoid collisions. The episode limit is 40 timesteps, and success requires learning implicit right-of-way protocols.
    
    \item \textbf{Traffic Junction (Hard)}: An $18 \times 18$ grid featuring a more complex road network with 8 possible routes and 20 cars. The spawning rate is lower ($[0.02, 0.05]$) but the increased number of agents and longer routes (episode limit of 80 timesteps) create more challenging coordination requirements. The grid contains multiple intersections, requiring agents to handle cascading traffic flow decisions.
\end{itemize}

\section{Model Details}
We adopt the same backbone architecture as MASIA~\cite{masia}.
The primary distinction is the inclusion of a message generator $f_{\psi}$ at the sender side.
This generator is parameterized as a single Linear layer mapping from $d_{\text{obs}}$ to $d_{\text{obs}}$ with no non-linear activation.

\section{Training Details}

We detail the specific hyperparameters used for the Message Value Estimator (MVE) training and the subsequent unlearning phase in Table~\ref{tab:combined_hyperparams}.
Across both phases, network parameters are updated using the Adam optimizer.
For the MVE (Phase 2), we utilize a multi-head attention mechanism ($H=4$) to capture complex agent interactions, training for $K_{\text{mve}}$ steps to ensure accurate counterfactual value estimation.
In the unlearning phase (Phase 3), the sparsity threshold $\lambda$ controls the aggressiveness of message pruning, while the anchoring coefficient $\beta$ stabilizes the policy by penalizing deviations from the teacher's original $Q$-values.

\begin{table}[h]
\centering
\caption{Training Hyperparameters for Phase 2 (MVE) and Phase 3 (Unlearning)}
\label{tab:combined_hyperparams}
\begin{tabular}{@{}llc|llc@{}}
\toprule
\multicolumn{3}{c|}{\textbf{Phase 2: Context-Aware MVE}} & \multicolumn{3}{c}{\textbf{Phase 3: Value-Conditional Unlearning}} \\
\midrule
\textbf{Sym.} & \textbf{Parameter} & \textbf{Value} & \textbf{Sym.} & \textbf{Parameter} & \textbf{Value} \\
\midrule
$K_{\text{mve}}$ & Training steps & 500,000 & $M_{\text{unlearn}}$ & Training steps & 1,500,000 \\
$d_{\text{mve}}$ & MVE hidden dim & 64 & $\lambda$ & Sparsity threshold & 1.0 \\
$H$ & Attention heads & 4 & $\beta$ & Anchoring weight & 10 \\
$\eta_{\text{mve}}$ & MVE Learning rate & 0.0005 & $\tau_{\text{ema}}$ & EMA momentum & 0.01 \\
\bottomrule
\end{tabular}
\end{table}

\section{Resolving Duplicate Messages Through Joint Optimization}
\label{app:duplicate_message}
A subtle case with the CMV in \cref{eq:cmv} is that two messages carrying identical information would each receive a marginal value of zero: masking either alone causes no return loss, since the other still carries the same information. A naive thresholding rule would then place both in $\mathcal{M}_{\text{red}}$ and suppress them jointly, eliminating the information entirely.
We argue that the joint objective $\mathcal{L}$ optimizes toward keeping exactly one of the two messages, without jointly suppressing useful information.

Consider two messages $m_i$ and $m_k$ carrying identical information, and examine three configurations of the unlearning dynamics. (i) Both messages active: $\mathcal{L}_{\text{anchor}}$ is small but $\mathcal{L}_{\text{sparse}}$ is large, since both fall below $\lambda$ and are penalized by \cref{eq:sparsity_loss}. (ii) Both messages suppressed: $\mathcal{L}_{\text{sparse}}$ vanishes, but the shared information is lost, the individual Q-values diverge from the frozen $Q_i^*$, and $\mathcal{L}_{\text{anchor}}$ grows as a result. (iii) Exactly one message active: $\mathcal{L}_{\text{anchor}}$ remains small because the information is still transmitted through the surviving channel, and $\mathcal{L}_{\text{sparse}}$ is reduced relative to (i) because only one message is being penalized. Configuration (iii) is therefore the minimizer of $\mathcal{L}$ over the three.

The unlearning phase of MUTE naturally navigates the network toward this minimizer.
Driven by the opposing continuous gradients—where the sparsity loss pushes away from keeping both messages, and the behavioral anchoring loss penalizes the performance degradation of suppressing both—the optimization process dynamically adjusts the message generator and receiver policies. This continuous adjustment naturally breaks the initial symmetry between the identical messages, allowing the system to smoothly converge to configuration (iii) and eliminate the redundant message without sacrificing task performance.

\section{Ablation Study: State-Aware Message Value Estimator}
\label{app:mve_ablation}

\begin{table}[H]
\centering
\small
\caption{Comparison of Win Rate and Communication Rate between the original message-only MUTE and an ablation variant using both state and messages.}
\label{tab:mve_ablation}
\resizebox{0.8\linewidth}{!}{
\tablestyle{1.5pt}{1.2}
\begin{tabular}{lcccc}
\toprule
 & \multicolumn{2}{c}{\textbf{Win Rate}} & \multicolumn{2}{c}{\textbf{Comm Rate}} \\
\cmidrule(lr){2-3} \cmidrule(lr){4-5}
\textbf{Environment} & \textbf{MUTE (ours)} & \textbf{MUTE state+msg} & \textbf{MUTE (ours)} & \textbf{MUTE state+msg} \\
\midrule
\texttt{smac\_1o\_10b\_vs\_1r} & 0.77 $\pm$ 0.02 & 0.75 $\pm$ 0.04 & 0.06 $\pm$ 0.01 & 0.05 $\pm$ 0.01 \\
\texttt{hallway}               & 0.97 $\pm$ 0.09 & 0.96 $\pm$ 0.12 & 0.02 $\pm$ 0.02 & 0.14 $\pm$ 0.06 \\
\texttt{smac\_1o\_2r\_vs\_4r}  & 0.81 $\pm$ 0.03 & 0.80 $\pm$ 0.02 & 0.23 $\pm$ 0.07 & 0.34 $\pm$ 0.25 \\
\bottomrule
\end{tabular}
}
\end{table}

\begin{figure}[H]
\centering
\includegraphics[width=0.9\linewidth]{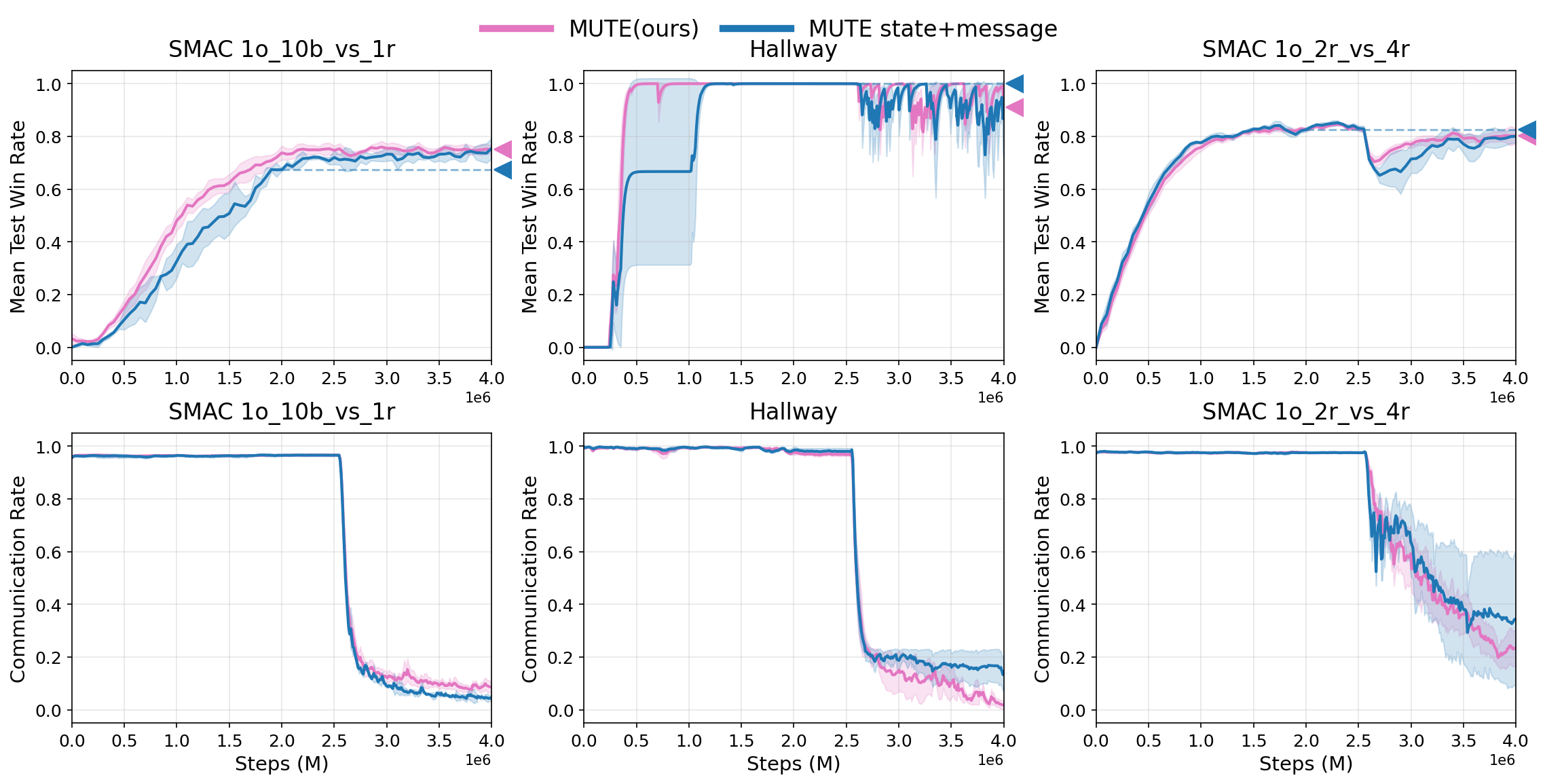}
\caption{Training curves of Win Rate and Communication Rate between the original message-only MUTE and an ablation variant using both state and messages.}
\label{fig:mve_ablation}
\vspace{-1em}
\end{figure}

In the main text, we design the Message Value Estimator (MVE) to evaluate the marginal contribution of a message using only the joint set of messages, $V_\phi(\mathbf{m})$.
While message importance may depend on the state in addition to the other agents' messages, we hypothesize that the joint message set inherently encapsulates sufficient environmental context because each message is generated from an agent's local observation history. 

To directly test this and validate our design choice, we implemented an additional variant of MUTE in which the MVE takes both the global state and the joint messages as input, denoted as $V_\phi(s, \mathbf{m})$. We compared this state-aware variant against our original message-only MVE across three environments. The results for both task performance (Win Rate) and communication bandwidth (Comm Rate) are presented in ~\cref{tab:mve_ablation} and~\cref{fig:mve_ablation}.

As shown in the table and the figure, the two variants exhibit very similar task performance across all evaluated environments. Furthermore, the original message-only MVE generally achieves a better communication bandwidth trade-off. Specifically, on \texttt{hallway}, the original MVE learns substantially faster and reaches a much lower final communication rate. Similarly, on \texttt{1o\_2r\_vs\_4r}, the \texttt{state$+$message} variant requires clearly higher communication after unlearning. While the state-aware variant achieves a slightly lower communication rate on \texttt{1o\_10b\_vs\_1r}, the overall results demonstrate that adding explicit state information provides negligible benefits while often degrading unlearning efficiency. This empirical evidence strongly supports our choice of a message-only MVE formulation.

\section{Hyperparameter Ablation Study}
\label{app:hyper_ablation}

In this section, we provide the detailed hyperparameter ablation study of MUTE across 10 benchmarks.

\begin{table}[h]
    \centering
    \caption{Test Win Rate of MUTE using different values of $\lambda$ across 10 benchmarks.}
    \label{tab:lambda_ablation_win}
    \resizebox{\linewidth}{!}{%
    \tablestyle{1.5pt}{1.2}
    \begin{tabular}{lcccccccccc}
        \toprule
        Method & Hallway & Hallway Grp & TJ & TJ Hard & Terran 10\_vs\_10 & MMM2 & Corridor & 1o\_10b\_vs\_1r & 1o\_2r\_vs\_4r & Zerg 10\_vs\_10 \\
        \midrule
        MUTE & \cellcolor{good}$1.00 \pm 0.10$ & \cellcolor{good}$1.00 \pm 0.08$ & \cellcolor{good}$0.84 \pm 0.03$ & \cellcolor{good}$0.65 \pm 0.06$ & \cellcolor{good}$0.57 \pm 0.07$ & $0.86 \pm 0.05$ & \cellcolor{good}$0.72 \pm 0.27$ & $0.74 \pm 0.04$ & \cellcolor{good}$0.81 \pm 0.09$ & \cellcolor{good}$0.55 \pm 0.03$ \\
        $\lambda=0.1\cdot \mu_{MVE}$ & $0.65 \pm 0.31$ & $0.92 \pm 0.06$ & $0.83 \pm 0.05$ & $0.64 \pm 0.04$ & $0.51 \pm 0.07$ & $0.84 \pm 0.03$ & $0.54 \pm 0.23$ & \cellcolor{good}$0.76 \pm 0.02$ & $0.73 \pm 0.11$ & $0.52 \pm 0.03$ \\
        $\lambda=0.5\cdot \mu_{MVE}$ & $0.62 \pm 0.35$ & $0.86 \pm 0.13$ & $0.81 \pm 0.06$ & $0.61 \pm 0.04$ & $0.51 \pm 0.10$ & \cellcolor{good}$0.87 \pm 0.05$ & $0.50 \pm 0.22$ & $0.70 \pm 0.05$ & $0.64 \pm 0.10$ & \cellcolor{bad}$0.50 \pm 0.04$ \\
        $\lambda=2.0\cdot \mu_{MVE}$ & \cellcolor{bad}$0.55 \pm 0.35$ & \cellcolor{bad}$0.59 \pm 0.10$ & \cellcolor{bad}$0.76 \pm 0.07$ & \cellcolor{bad}$0.59 \pm 0.05$ & \cellcolor{bad}$0.48 \pm 0.07$ & \cellcolor{bad}$0.80 \pm 0.03$ & \cellcolor{bad}$0.49 \pm 0.22$ & \cellcolor{bad}$0.39 \pm 0.15$ & \cellcolor{bad}$0.57 \pm 0.11$ & $0.54 \pm 0.03$ \\
        \bottomrule
    \end{tabular}%
    }
    \vspace{-1em}
\end{table}

\begin{table}[h]
    \centering
    \caption{Communication Rate of MUTE using different values of $\lambda$ across 10 benchmarks.}
    \label{tab:lambda_ablation_comm}
    \resizebox{\linewidth}{!}{%
    \tablestyle{1.5pt}{1.2}
    \begin{tabular}{lcccccccccc}
        \toprule
        Method & Hallway & Hallway Grp & TJ & TJ Hard & Terran 10\_vs\_10 & MMM2 & Corridor & 1o\_10b\_vs\_1r & 1o\_2r\_vs\_4r & Zerg 10\_vs\_10 \\
        \midrule
        MUTE & \cellcolor{good}$0.03 \pm 0.06$ & \cellcolor{good}$0.10 \pm 0.10$ & $0.02 \pm 0.01$ & \cellcolor{good}$0.05 \pm 0.02$ & $0.05 \pm 0.07$ & $0.02 \pm 0.01$ & \cellcolor{good}$0.01 \pm 0.02$ & $0.08 \pm 0.05$ & $0.23 \pm 0.20$ & $0.03 \pm 0.02$ \\
        $\lambda=0.1\cdot \mu_{MVE}$ & \cellcolor{bad}$0.38 \pm 0.28$ & \cellcolor{bad}$0.15 \pm 0.08$ & \cellcolor{bad}$0.09 \pm 0.02$ & \cellcolor{bad}$0.30 \pm 0.26$ & \cellcolor{bad}$0.28 \pm 0.24$ & \cellcolor{bad}$0.18 \pm 0.19$ & \cellcolor{bad}$0.20 \pm 0.24$ & \cellcolor{bad}$0.20 \pm 0.20$ & \cellcolor{bad}$0.24 \pm 0.16$ & \cellcolor{bad}$0.15 \pm 0.10$ \\
        $\lambda=0.5\cdot \mu_{MVE}$ & $0.26 \pm 0.27$ & $0.14 \pm 0.04$ & $0.02 \pm 0.03$ & $0.14 \pm 0.08$ & $0.19 \pm 0.23$ & $0.08 \pm 0.11$ & $0.03 \pm 0.04$ & $0.07 \pm 0.07$ & $0.11 \pm 0.09$ & $0.02 \pm 0.02$ \\
        $\lambda=2.0\cdot \mu_{MVE}$ & $0.16 \pm 0.18$ & $0.13 \pm 0.09$ & \cellcolor{good}$0.01 \pm 0.05$ & $0.05 \pm 0.05$ & \cellcolor{good}$0.00 \pm 0.00$ & \cellcolor{good}$0.01 \pm 0.01$ & $0.02 \pm 0.03$ & \cellcolor{good}$0.00 \pm 0.00$ & \cellcolor{good}$0.07 \pm 0.10$ & \cellcolor{good}$0.00 \pm 0.00$ \\
        \bottomrule
    \end{tabular}%
    }
    \vspace{-1em}
\end{table}

\textbf{Impact of $\lambda$ in $\mathcal{M}_{\text{red}}$.} We evaluate the effect of the threshold multiplier $\lambda$ by comparing values of 0.1, 0.5, and 2.0 against the MUTE default of 1.0. A larger $\lambda$ imposes a stricter penalty on communication, whereas a smaller $\lambda$ results in a less restrictive bandwidth when selecting low-value message candidates for unlearning. Here, the threshold is scaled by $\mu_{\text{MVE}}$, which represents the moving average of the MVE predictions during training (i.e., the expected message value). As shown in~\cref{tab:lambda_ablation_win} and~\cref{tab:lambda_ablation_comm}, utilizing a relaxed threshold ($\lambda=0.1\cdot \mu_{\text{MVE}}$ or $0.5\cdot \mu_{\text{MVE}}$) yields the highest communication rates across the benchmarks. Conversely, an overly strict threshold ($\lambda=2.0\cdot \mu_{\text{MVE}}$) aggressively reduces the communication rate to the lowest level---even reaching 0.0 in three environments---but this severely degrades task performance. The default setting ($\lambda=1.0$) strikes the optimal balance, successfully minimizing communication bandwidth while preserving the highest overall task performance across the benchmarks.

\begin{table}[H]
    \centering
    \caption{Test Win Rate of MUTE using different values of $\beta$ across 10 benchmarks.}
    \label{tab:beta_ablation_win}
    \resizebox{\linewidth}{!}{%
    \tablestyle{1.5pt}{1.2}
    \begin{tabular}{lcccccccccc}
        \toprule
        Method & Hallway & Hallway Grp & TJ & TJ Hard & Terran 10\_vs\_10 & MMM2 & Corridor & 1o\_10b\_vs\_1r & 1o\_2r\_vs\_4r & Zerg 10\_vs\_10 \\
        \midrule
        MUTE & \cellcolor{good}$1.00 \pm 0.10$ & \cellcolor{good}$1.00 \pm 0.08$ & $0.84 \pm 0.03$ & \cellcolor{good}$0.65 \pm 0.06$ & \cellcolor{good}$0.57 \pm 0.07$ & $0.86 \pm 0.05$ & \cellcolor{good}$0.72 \pm 0.27$ & $0.74 \pm 0.04$ & \cellcolor{good}$0.81 \pm 0.09$ & \cellcolor{good}$0.55 \pm 0.03$ \\
        $\beta=1$ & \cellcolor{bad}$0.49 \pm 0.34$ & \cellcolor{bad}$0.97 \pm 0.03$ & \cellcolor{bad}$0.81 \pm 0.03$ & $0.64 \pm 0.06$ & \cellcolor{bad}$0.53 \pm 0.14$ & $0.86 \pm 0.04$ & \cellcolor{bad}$0.57 \pm 0.24$ & \cellcolor{bad}$0.73 \pm 0.03$ & \cellcolor{bad}$0.78 \pm 0.03$ & \cellcolor{bad}$0.51 \pm 0.05$ \\
        $\beta=50$ & $0.80 \pm 0.31$ & \cellcolor{good}$1.00 \pm 0.00$ & \cellcolor{good}$0.85 \pm 0.04$ & \cellcolor{bad}$0.53 \pm 0.07$ & $0.54 \pm 0.09$ & \cellcolor{good}$0.88 \pm 0.04$ & $0.58 \pm 0.23$ & \cellcolor{good}$0.79 \pm 0.06$ & \cellcolor{good}$0.81 \pm 0.04$ & $0.52 \pm 0.03$ \\
        \bottomrule
    \end{tabular}%
    }
    \vspace{-1em}
\end{table}

\begin{table}[H]
    \centering
    \caption{Communication Rate of MUTE using different values of $\beta$ across 10 benchmarks.}
    \label{tab:beta_ablation_comm}
    \resizebox{\linewidth}{!}{
    \tablestyle{1.5pt}{1.2}
    \begin{tabular}{lcccccccccc}
        \toprule
        Method & Hallway & Hallway Grp & TJ & TJ Hard & Terran 10\_vs\_10 & MMM2 & Corridor & 1o\_10b\_vs\_1r & 1o\_2r\_vs\_4r & Zerg 10\_vs\_10 \\
        \midrule
        MUTE & \cellcolor{good}$0.03 \pm 0.06$ & \cellcolor{good}$0.10 \pm 0.10$ & \cellcolor{good}$0.02 \pm 0.01$ & $0.05 \pm 0.02$ & $0.05 \pm 0.07$ & \cellcolor{good}$0.02 \pm 0.01$ & \cellcolor{good}$0.01 \pm 0.02$ & $0.08 \pm 0.05$ & \cellcolor{bad}$0.23 \pm 0.20$ & \cellcolor{good}$0.03 \pm 0.02$ \\
        $\beta=1$ & $0.18 \pm 0.08$ & $0.11 \pm 0.06$ & $0.03 \pm 0.01$ & \cellcolor{good}$0.04 \pm 0.04$ & \cellcolor{good}$0.01 \pm 0.01$ & \cellcolor{good}$0.02 \pm 0.00$ & $0.02 \pm 0.03$ & \cellcolor{good}$0.02 \pm 0.01$ & \cellcolor{good}$0.06 \pm 0.02$ & \cellcolor{good}$0.03 \pm 0.02$ \\
        $\beta=50$ & \cellcolor{bad}$0.30 \pm 0.18$ & \cellcolor{bad}$0.18 \pm 0.06$ & \cellcolor{bad}$0.07 \pm 0.04$ & \cellcolor{bad}$0.10 \pm 0.04$ & \cellcolor{bad}$0.06 \pm 0.04$ & \cellcolor{bad}$0.06 \pm 0.04$ & \cellcolor{bad}$0.04 \pm 0.03$ & \cellcolor{bad}$0.15 \pm 0.03$ & $0.22 \pm 0.06$ & \cellcolor{bad}$0.09 \pm 0.01$ \\
        \bottomrule
    \end{tabular}%
    }
    \vspace{-1em}
\end{table}

\textbf{Sensitivity to $\beta$.}
We ablate the hyperparameter $\beta$, which controls the weight of the behavioral anchoring during the unlearning phase, comparing the MUTE default of $\beta=10$ against extreme values of 1 and 50.
As shown in~\cref{tab:beta_ablation_win} and~\cref{tab:beta_ablation_comm}, setting $\beta$ to a small value ($\beta=1$) results in a severe drop in task performance (Win Rate) across the majority of environments. This indicates that when $\beta$ is too weak, the model suffers from catastrophic forgetting, failing to preserve essential cooperative behaviors as messages are aggressively removed. Conversely, setting $\beta$ to an overly large value ($\beta=50$) makes the policy too rigid. While it recovers much of the task performance compared to $\beta=1$, it resists unlearning, yielding the worst (highest) communication rates across nearly all benchmarks. The default setting ($\beta=10$) successfully navigates this trade-off, providing an optimal level of anchoring to preserve high win rates while still allowing the model to safely and efficiently reduce communication bandwidth.

\begin{table}[H]
    \centering
    \caption{Test Win Rate of MUTE with $L_1$ and $L_2$ norms for $\mathcal{L}_{\text{sparse}}$ across 10 benchmarks.}
    \label{tab:norm_ablation_win}
    \resizebox{\linewidth}{!}{%
    \tablestyle{1.5pt}{1.2}
    \begin{tabular}{lcccccccccc}
        \toprule
        Method & Hallway & Hallway Grp & TJ & TJ Hard & Terran 10\_vs\_10 & MMM2 & Corridor & 1o\_10b\_vs\_1r & 1o\_2r\_vs\_4r & Zerg 10\_vs\_10 \\
        \midrule
        MUTE & \cellcolor{good}$1.00 \pm 0.10$ & \cellcolor{good}$1.00 \pm 0.08$ & \cellcolor{bad}$0.84 \pm 0.03$ & \cellcolor{good}$0.65 \pm 0.06$ & \cellcolor{good}$0.57 \pm 0.07$ & \cellcolor{good}$0.86 \pm 0.05$ & \cellcolor{good}$0.72 \pm 0.27$ & \cellcolor{good}$0.74 \pm 0.04$ & \cellcolor{good}$0.81 \pm 0.09$ & \cellcolor{good}$0.55 \pm 0.03$ \\
        $L_2$ norm & \cellcolor{bad}$0.54 \pm 0.35$ & \cellcolor{bad}$0.80 \pm 0.31$ & \cellcolor{good}$0.85 \pm 0.03$ & \cellcolor{bad}$0.57 \pm 0.06$ & \cellcolor{bad}$0.54 \pm 0.07$ & \cellcolor{bad}$0.82 \pm 0.08$ & \cellcolor{bad}$0.56 \pm 0.23$ & \cellcolor{bad}$0.65 \pm 0.10$ & \cellcolor{bad}$0.68 \pm 0.11$ & \cellcolor{bad}$0.52 \pm 0.04$ \\
        \bottomrule
    \end{tabular}%
    }
    \vspace{-1em}
\end{table}

\begin{table}[H]
    \centering
    \caption{Communication Rate of MUTE with $L_1$ and $L_2$ norms for $\mathcal{L}_{\text{sparse}}$ across 10 benchmarks.}
    \label{tab:norm_ablation_comm}
    \resizebox{\linewidth}{!}{%
    \tablestyle{1.5pt}{1.2}
    \begin{tabular}{lcccccccccc}
        \toprule
        Method & Hallway & Hallway Grp & TJ & TJ Hard & Terran 10\_vs\_10 & MMM2 & Corridor & 1o\_10b\_vs\_1r & 1o\_2r\_vs\_4r & Zerg 10\_vs\_10 \\
        \midrule
        MUTE & \cellcolor{good}$0.03 \pm 0.06$ & \cellcolor{good}$0.10 \pm 0.10$ & \cellcolor{good}$0.02 \pm 0.01$ & \cellcolor{good}$0.05 \pm 0.02$ & \cellcolor{good}$0.05 \pm 0.07$ & \cellcolor{good}$0.02 \pm 0.01$ & \cellcolor{good}$0.01 \pm 0.02$ & \cellcolor{bad}$0.08 \pm 0.05$ & \cellcolor{bad}$0.23 \pm 0.20$ & \cellcolor{bad}$0.03 \pm 0.02$ \\
        $L_2$ norm & \cellcolor{bad}$0.31 \pm 0.15$ & \cellcolor{bad}$0.13 \pm 0.08$ & \cellcolor{bad}$0.04 \pm 0.03$ & \cellcolor{good}$0.05 \pm 0.03$ & \cellcolor{good}$0.05 \pm 0.06$ & \cellcolor{bad}$0.03 \pm 0.00$ & \cellcolor{bad}$0.02 \pm 0.01$ & \cellcolor{good}$0.03 \pm 0.02$ & \cellcolor{good}$0.22 \pm 0.25$ & \cellcolor{good}$0.01 \pm 0.01$ \\
        \bottomrule
    \end{tabular}%
    }
    \vspace{-1em}
\end{table}

\textbf{Impact of the Norm Constraint in $\mathcal{L}_{\text{sparse}}$.}
We ablate the choice of regularization norm used in the sparsity objective, comparing MUTE's default $L_1$  against an $L_2$ norm. As shown in~\cref{tab:norm_ablation_win}  and~\cref{tab:norm_ablation_comm} , substituting the $L_1$ norm with an $L_2$ norm results in a degradation of task performance (Win Rate) across 9 out of 10 environments. Mathematically, the $L_1$ norm is well-suited for this objective because it naturally induces exact sparsity, effectively driving the values of redundant messages strictly to zero. In contrast, the $L_2$ norm tends to distribute the penalty, uniformly shrinking message values without discretely eliminating them. Consequently, the $L_2$ variant struggles to effectively sparsify communication in several environments (e.g., Hallway, where the communication rate balloons to 0.31). While the $L_2$ norm does achieve slightly lower communication rates on a few specific maps (such as \texttt{1o\_10b\_vs\_1r} and \texttt{Zerg 10\_vs\_10}), it does so indiscriminately, sacrificing massive amounts of task performance. This confirms that the $L_1$ norm is essential for safely pruning the communication channel without destroying cooperative coordination.

\section{Empirical Validation of Bounded Estimation Errors}
\label{app:estimation_errors}

In~\cref{thm:return_gap}, the performance bound of the unlearned policy relies on two key assumptions: the action-value approximation error is bounded by $\epsilon$, and the Message Value Estimator (MVE) error is bounded by $\delta$. To demonstrate that these theoretical assumptions hold in practice, we tracked the empirical max-norm deviation ($\|Q^*_{\text{tot}} - Q_{\text{tot}}\|_\infty$) and the MVE loss throughout the unlearning phase.

As summarized in~\cref{tab:max_norm_q_tot}, the empirical errors remain exceptionally tight across all evaluated environments. The max-norm deviation, which corresponds to $\epsilon$, stays consistently below $1.0$, ranging from $0.15$ on \texttt{MMM2} to $0.48$ on \texttt{1o\_2r\_vs\_4r}. Similarly, the MVE loss, which relates to the estimation error $\delta$, remains low and stable across the benchmarks. Furthermore, the sparsity loss is consistently near zero, indicating stable convergence during the unlearning process. 

These empirical results strongly validate our theoretical assumptions. They confirm that the value estimations do not diverge wildly when messages are unlearned, and that the anchoring objective ($\mathcal{L}_{\text{anchor}}$) successfully enforces the bounded errors required to guarantee safe unlearning.

\begin{table}[H]
\centering
\caption{Summary of $\|Q_{\text{tot}}^* - Q_{\text{tot}}\|_\infty$, MVE loss, and Sparsity loss per episode across 10 environments during the unlearning phase.}
\label{tab:max_norm_q_tot}
    \resizebox{0.5\linewidth}{!}{%
    \tablestyle{1.5pt}{1.2}
\begin{tabular}{lrrr}
\toprule
Env & $\|Q_{\text{tot}}^* - Q_{\text{tot}}\|_\infty$ & MVE Loss & Sparsity Loss \\
\midrule
1o\_10b\_vs\_1r & $0.46 \pm 0.18$ & $0.52 \pm 0.06$ & $0.04 \pm 0.04$ \\
1o\_2r\_vs\_4r & $0.48 \pm 0.06$ & $0.38 \pm 0.03$ & $0.01 \pm 0.01$ \\
MMM2 & $0.15 \pm 0.03$ & $0.18 \pm 0.09$ & $0.02 \pm 0.01$ \\
corridor & $0.39 \pm 0.11$ & $0.39 \pm 0.51$ & $0.06 \pm 0.04$ \\
hallway & $0.15 \pm 0.05$ & $0.41 \pm 0.05$ & $0.01 \pm 0.01$ \\
hallway\_group & $0.44 \pm 0.05$ & $0.41 \pm 0.15$ & $0.01 \pm 0.00$ \\
terran\_10v10 & $0.43 \pm 0.17$ & $0.37 \pm 0.13$ & $0.02 \pm 0.01$ \\
tj\_hard & $0.40 \pm 0.24$ & $0.17 \pm 0.09$ & $0.00 \pm 0.00$ \\
tj\_medium & $0.46 \pm 0.25$ & $0.06 \pm 0.06$ & $0.00 \pm 0.00$ \\
zerg\_10v10 & $0.43 \pm 0.15$ & $0.26 \pm 0.11$ & $0.03 \pm 0.01$ \\
\bottomrule
\end{tabular}
}
\end{table}

\section{Relation to Learned Empty Messages}
\label{app:nullmsg}

\begin{figure}[H]
\centering
\includegraphics[width=0.9\linewidth]{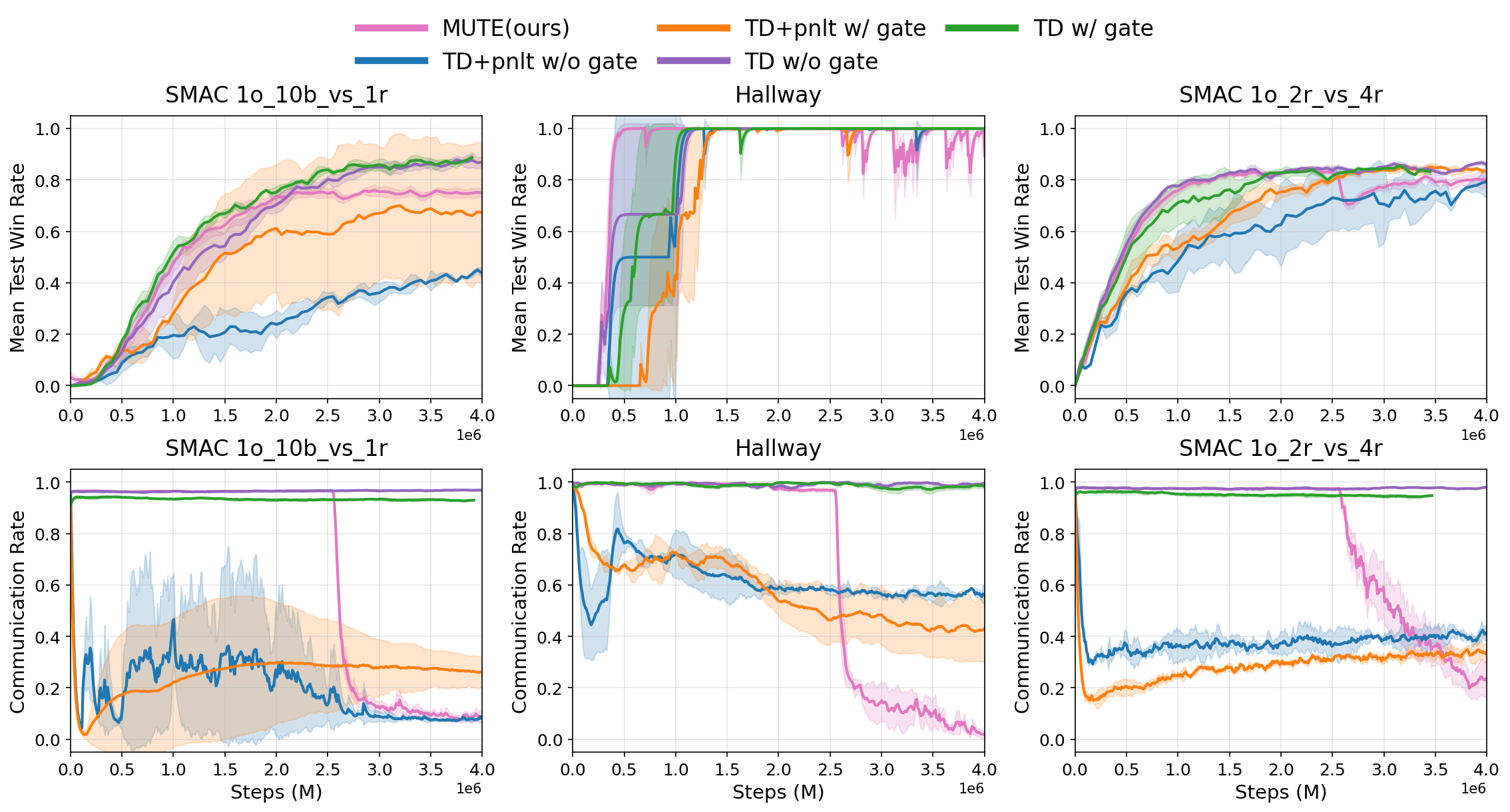}
\caption{Training curves of Win Rate and Communication Rate between the MUTE and }
\label{fig:nullms}
\vspace{-1em}
\end{figure}

\begin{table}[H]
    \small
    \centering
    \caption{Combined Win Rate and Communication Rate (mean $\pm$ CI) for MUTE vs. Baseline Methods across Different Environments.}
    \label{tab:mute_full_combined_results}
    \resizebox{\linewidth}{!}{%
    \tablestyle{1.5pt}{1.2}
    \begin{tabular}{lcccccccccc}
        \toprule
        & \multicolumn{5}{c}{\bfseries Test Win Rate (\%)} & \multicolumn{5}{c}{\bfseries Communication Rate (\%)} \\
        \cmidrule(lr){2-6} \cmidrule(lr){7-11}
        {\bfseries Environment} & {\bfseries MUTE (ours)} & {\bfseries TD+pnlt w/o gate} & {\bfseries TD+pnlt w/ gate} & {\bfseries TD w/o gate} & {\bfseries TD w/ gate} & {\bfseries MUTE (ours)} & {\bfseries TD+pnlt w/o gate} & {\bfseries TD+pnlt w/ gate} & {\bfseries TD w/o gate} & {\bfseries TD w/ gate} \\
        \midrule
        smac\_1o\_10b\_vs\_1r & $0.77 \pm 0.02$ & $0.44 \pm 0.02$ & $0.67 \pm 0.28$ & $0.87 \pm 0.02$ & $0.89 \pm 0.01$ & $0.06 \pm 0.01$ & $0.09 \pm 0.00$ & $0.26 \pm 0.06$ & $0.97 \pm 0.00$ & $0.93 \pm 0.00$ \\
        hallway & $0.97 \pm 0.09$ & $1.00 \pm 0.00$ & $1.00 \pm 0.00$ & $1.00 \pm 0.00$ & $1.00 \pm 0.00$ & $0.02 \pm 0.02$ & $0.56 \pm 0.04$ & $0.43 \pm 0.13$ & $1.00 \pm 0.00$ & $0.98 \pm 0.01$ \\
        smac\_1o\_2r\_vs\_4r & $0.81 \pm 0.03$ & $0.79 \pm 0.07$ & $0.83 \pm 0.00$ & $0.86 \pm 0.01$ & $0.83 \pm 0.01$ & $0.23 \pm 0.07$ & $0.41 \pm 0.03$ & $0.34 \pm 0.03$ & $0.98 \pm 0.00$ & $0.94 \pm 0.01$ \\
        \bottomrule
    \end{tabular}%
    }
    \vspace{-1em}
\end{table}

A natural question arises: instead of our proposed unlearning framework, could standard RL simply learn to output an empty message (e.g., an all-zero token or via a learned gating mechanism) to reduce communication overhead? In multi-agent settings heavily reliant on shared, delayed team rewards, the benefit of silencing one specific message is weakly attributable. Consequently, standard RL struggles to reliably discover \textit{selective silence} purely through environmental reward feedback, even when silencing is explicitly provided as an available action. 

To verify this, \cref{tab:mute_full_combined_results} compares MUTE against standard RL baselines augmented with communication gating (\texttt{w/ gate}) and direct communication penalties in the loss function (\texttt{+pnlt}).
Notably, while the baseline variants allocate their entire 4M-step budget to standard RL training, MUTE achieves its results under the exact same budget by partitioning it into 2M steps for RL pre-training, 0.5M steps for MVE training, and 1.5M steps for value-guided unlearning.

As the empirical results illustrate, simply providing an empty-message option (\texttt{TD w/ gate}) does not reduce the communication bandwidth; both \texttt{TD w/ gate} and \texttt{TD w/o gate} keep the communication rate near 100\%.
Conversely, while adding a direct penalty (\texttt{TD+pnlt w/o gate} and \texttt{TD+pnlt w/ gate}) forces a reduction in communication, the suppression is less targeted. This untargeted penalization degrades coordination and results in slower convergence. 
In contrast, MUTE's counterfactual unlearning allows it to maintain highly competitive win rates while achieving the lowest overall communication rate.

\section{Compuatational Cost and Scalability}
\label{app:comp_profiling}

\begin{table}[H]
    \centering
    \caption{Comprehensive computational profiling. (a) compares baseline training and inference costs. (b), (c), and (d) ablate the computational cost of evaluating Eq. (3) in \texttt{traffic\_junction} across different hyperparameters.}
    \label{tab:combined_profiling}

    \begin{subtable}{0.48\linewidth}
        \centering
        \caption{Computational profiling in SMAC 1o\_10b\_vs\_1r.}
        \label{tab:comp_profiling}
        \resizebox{\linewidth}{!}{%
        \tablestyle{1.5pt}{1.2}
        \begin{tabular}{lccc}
            \toprule
            \textbf{Method} & \textbf{Parameters} & \textbf{Training Time (hrs)} & \textbf{Inference FLOPs (M)} \\
            \midrule
            QMIX   & 85,160  & $6.07 \pm 0.91$ & 0.79 \\
            IC3Net & 47,690  & $5.23 \pm 1.23$ & 0.35 \\
            TarMAC & 53,930  & $5.42 \pm 0.75$ & 1.11 \\
            MAIC   & 114,607 & $15.6 \pm 3.56$ & 2.48 \\
            MASIA  & 176,029 & $7.82 \pm 2.33$ & 1.36 \\
            MAPPO  & 85,200  & $2.03 \pm 0.15$ & 0.26 \\
            SMS    & 42,524  & $8.45 \pm 1.27$ & 0.84 \\
            MUTE   & 185,149 & $8.09 \pm 0.88$ & 1.55 \\
            \bottomrule
        \end{tabular}%
        }
    \end{subtable}\hfill
    \begin{subtable}{0.48\linewidth}
        \centering
        \caption{Cost per timestep across agent numbers ($N$).}
        \label{tab:agt_num_sweep}
        \resizebox{\linewidth}{!}{%
        \begin{tabular}{lrrr}
            \toprule
            \textbf{$N$} & \textbf{Time (ms)} & \textbf{MFLOPs} & \textbf{Mean Test Win Rate} \\
            \midrule
            2  & $3.077 \pm 0.03$  & 23.6   & $1.000 \pm 0.05$ \\
            3  & $5.405 \pm 0.05$  & 47.7   & $0.975 \pm 0.04$ \\
            5  & $8.053 \pm 0.05$  & 123.2  & $0.975 \pm 0.04$ \\
            7  & $8.802 \pm 0.01$  & 238.7  & $0.875 \pm 0.10$ \\
            10 & $14.589 \pm 0.09$ & 497.2  & $0.725 \pm 0.07$ \\
            15 & $27.791 \pm 0.04$ & 1195.9 & $0.850 \pm 0.05$ \\
            20 & $30.471 \pm 0.11$ & 2301.6 & $0.625 \pm 0.08$ \\
            \bottomrule
        \end{tabular}%
        }
    \end{subtable}

    \vspace{1.5em} %

    \begin{subtable}{0.48\linewidth}
        \centering
        \caption{Cost per timestep across message dimensions ($d_{\text{msg}}$).}
        \label{tab:msg_dim_sweep}
        \resizebox{\linewidth}{!}{%
        \begin{tabular}{lrrr}
            \toprule
            \textbf{$d_{\text{msg}}$} & \textbf{Time (ms)} & \textbf{MFLOPs} & \textbf{Mean Test Win Rate} \\
            \midrule
            4  & $10.90 \pm 0.25$ & 458.17  & $0.85 \pm 0.02$ \\
            8  & $14.80 \pm 0.91$ & 497.23  & $0.95 \pm 0.03$ \\
            16 & $15.17 \pm 3.14$ & 575.35  & $0.96 \pm 0.01$ \\
            32 & $16.73 \pm 1.14$ & 731.58  & $0.90 \pm 0.05$ \\
            64 & $20.00 \pm 2.56$ & 1044.04 & $0.68 \pm 0.03$ \\
            \bottomrule
        \end{tabular}%
        }
    \end{subtable}\hfill
    \begin{subtable}{0.48\linewidth}
        \centering
        \caption{Cost across number of sampled experiences ($N_{\text{exp}}$).}
        \label{tab:seq_len_sweep}
        \resizebox{\linewidth}{!}{%
        \begin{tabular}{lrrr}
            \toprule
            \textbf{$N_{\text{exp}}$} & \textbf{Time (s)} & \textbf{GFLOPs} & \textbf{Mean Test Win Rate} \\
            \midrule
            4  & 0.4346 & 2.49  & $0.78 \pm 0.01$ \\
            8  & 0.4667 & 4.97  & $0.88 \pm 0.01$ \\
            16 & 0.5772 & 9.94  & $0.90 \pm 0.01$ \\
            32 & 0.5632 & 19.89 & $0.73 \pm 0.01$ \\
            64 & 0.6156 & 39.78 & $0.88 \pm 0.02$ \\
            \bottomrule
        \end{tabular}%
        }
    \end{subtable}

\end{table}

Because MUTE introduces an explicit Message Value Estimator (MVE) and a dedicated unlearning phase, it is critical to evaluate its computational overhead.
All computational profiling and timing evaluations were conducted on an L40S GPU.
\Cref{tab:combined_profiling}(a) compares the training wall-clock time and inference FLOPs of MUTE against established baselines in the SMAC benchmark.
While MUTE introduces additional parameters (185K) to support the MVE, its total training time (8.09 hours) remains highly competitive.
It trains significantly faster than heavy communication frameworks like MAIC (15.6 hours) and is on par with MASIA and SMS. 
During inference, MUTE operates with 1.55M FLOPs per step, confirming that the value-guided unlearning process does not impose prohibitive forward-pass bottlenecks compared to other state-of-the-art multi-agent methods.

To understand how the counterfactual value estimation scales in denser environments, we ablate the cost of evaluating~\cref{eq:cmv} across a varying number of agents ($N$) in \texttt{traffic\_junction}. As shown in \cref{tab:combined_profiling}(b), the computational time and MFLOPs scale as the environment grows from 2 to 20 agents. However, even at $N=20$, the evaluation time remains practically efficient at approximately 30 ms per timestep. The mean test win rate slightly decays at extreme agent counts, which reflects the inherent environmental difficulty of large-scale traffic coordination rather than a computational bottleneck.
Finally, \cref{tab:combined_profiling}(c) and (d) analyze the scalability of MUTE with respect to its internal hyperparameters: message dimension ($d_{\text{msg}}$) and the number of sampled experiences ($N_{\text{exp}}$). Increasing $d_{\text{msg}}$ yields a sub-linear growth in evaluation time, but we observe that excessively large message spaces (e.g., $d_{\text{msg}} = 64$) cause a drop in the win rate, likely due to the increased difficulty of accurately estimating counterfactual values in high-dimensional continuous spaces. Similarly, sweeping $N_{\text{exp}}$ demonstrates that GFLOPs scale strictly linearly with the sample size. A moderate configuration (e.g., $N_{\text{exp}} = 16$) provides an optimal balance, yielding peak coordination performance without incurring unnecessary computational drag.

\section{Normalized Performance--Communication Tradeoff}
\label{app:normalized_tradeoff}

\begin{figure}[H]
\centering
\includegraphics[width=\linewidth]{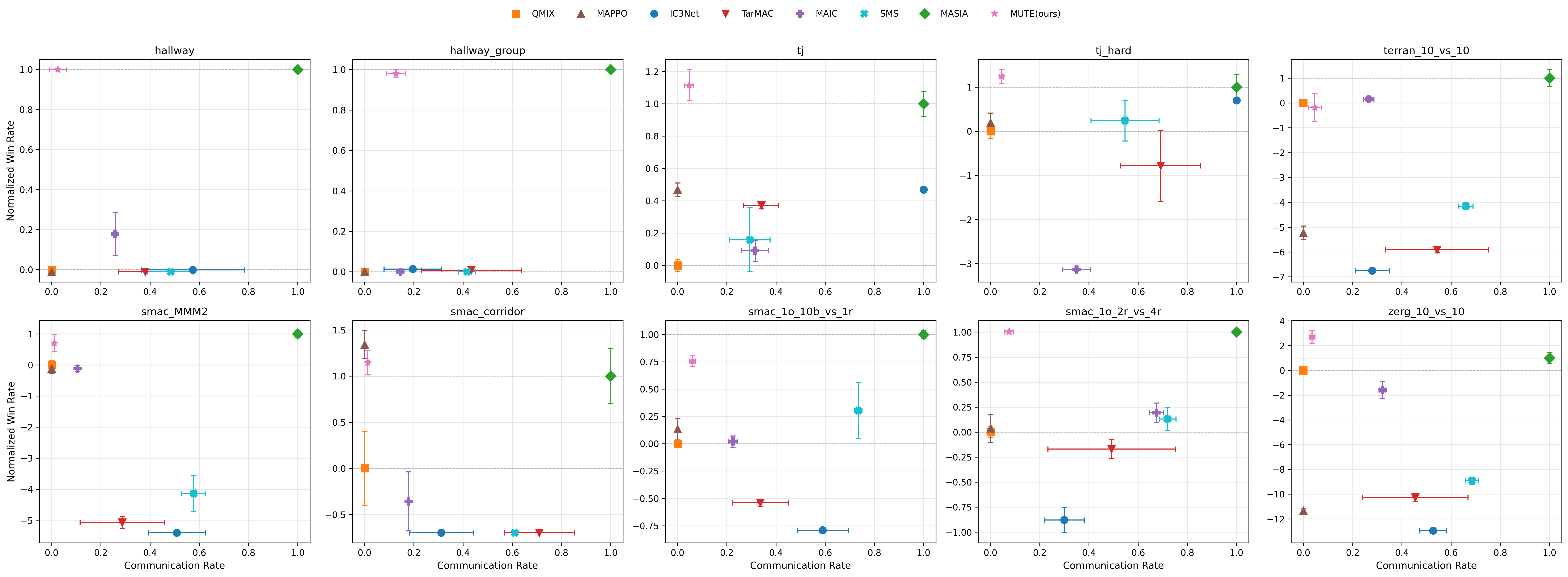}
\caption{Comparison of performance and communication }
\label{fig:scatter_comm_vs_norm_wi}
\vspace{-1em}
\end{figure}

To explicitly evaluate the tradeoff between coordination performance and communication bandwidth, we project the results of all evaluated methods onto a normalized performance--communication plane.

Specifically, for each environment, we calculate the normalized win rate ($W_{\text{norm}}$) as follows:
\begin{equation}
W_{\text{norm}} = \frac{W - W_{\text{base}}}{W_{\text{full}} - W_{\text{base}} + \epsilon}
\end{equation}
where $W$ is the raw win rate of the evaluated method, $W_{\text{base}}$ is the win rate of the no-communication baseline (QMIX), $W_{\text{full}}$ is the win rate of the full-communication baseline (MASIA), and $\epsilon$ is a small constant added for numerical stability to prevent division by zero. 

\Cref{fig:scatter_comm_vs_norm_wi} visualizes this normalized win rate against the final communication rate for 10 diverse environments. In this 2D plane, the ideal operating region is the top-left corner, which corresponds to matching or exceeding full-communication performance ($W_{\text{norm}} \ge 1.0$) while incurring minimal communication overhead (Communication Rate $\approx 0.0$). 

As the scatter plots clearly demonstrate, MUTE (represented by the pink star) consistently dominates this ideal top-left region across the benchmarks. While the full-communication baseline MASIA anchors the top-right ($W_{\text{norm}} = 1.0$, Communication Rate $= 1.0$), MUTE matches or occasionally exceeds this performance (e.g., in \textit{smac\_corridor} and \textit{tj\_hard}) while reducing communication by over 90\%. 

Furthermore, the plots highlight the failure modes of existing communication-reduction baselines. Methods such as IC3Net, TarMAC, and SMS frequently suffer from severe drops in normalized win rates (dropping well below $0.0$, indicating worse performance than the no-communication QMIX baseline) or fail to meaningfully reduce the communication rate. MUTE's value-guided counterfactual unlearning is the only approach that robustly preserves absolute coordination capabilities while successfully isolating and eliminating redundant messages.

\section{Ablation on Training Budget Allocation}
\label{app:budget_ablation}

To ensure that our reported three-stage training schedule was not chosen arbitrarily, we performed a budget-sensitivity ablation on the \textit{smac\_1o\_10b\_vs\_1r} environment. MUTE operates on a fixed total training budget of 4M environment steps, which must be partitioned across three distinct phases: (1) RL pre-training of the dense expert policy, (2) training the Message Value Estimator (MVE), and (3) value-guided unlearning. 

In this ablation, we compare our default schedule (2M / 0.5M / 1.5M) against several alternative allocations while keeping the 4M total step budget strictly constant.

\begin{figure}[H]
    \centering
    \includegraphics[width=0.75\linewidth]{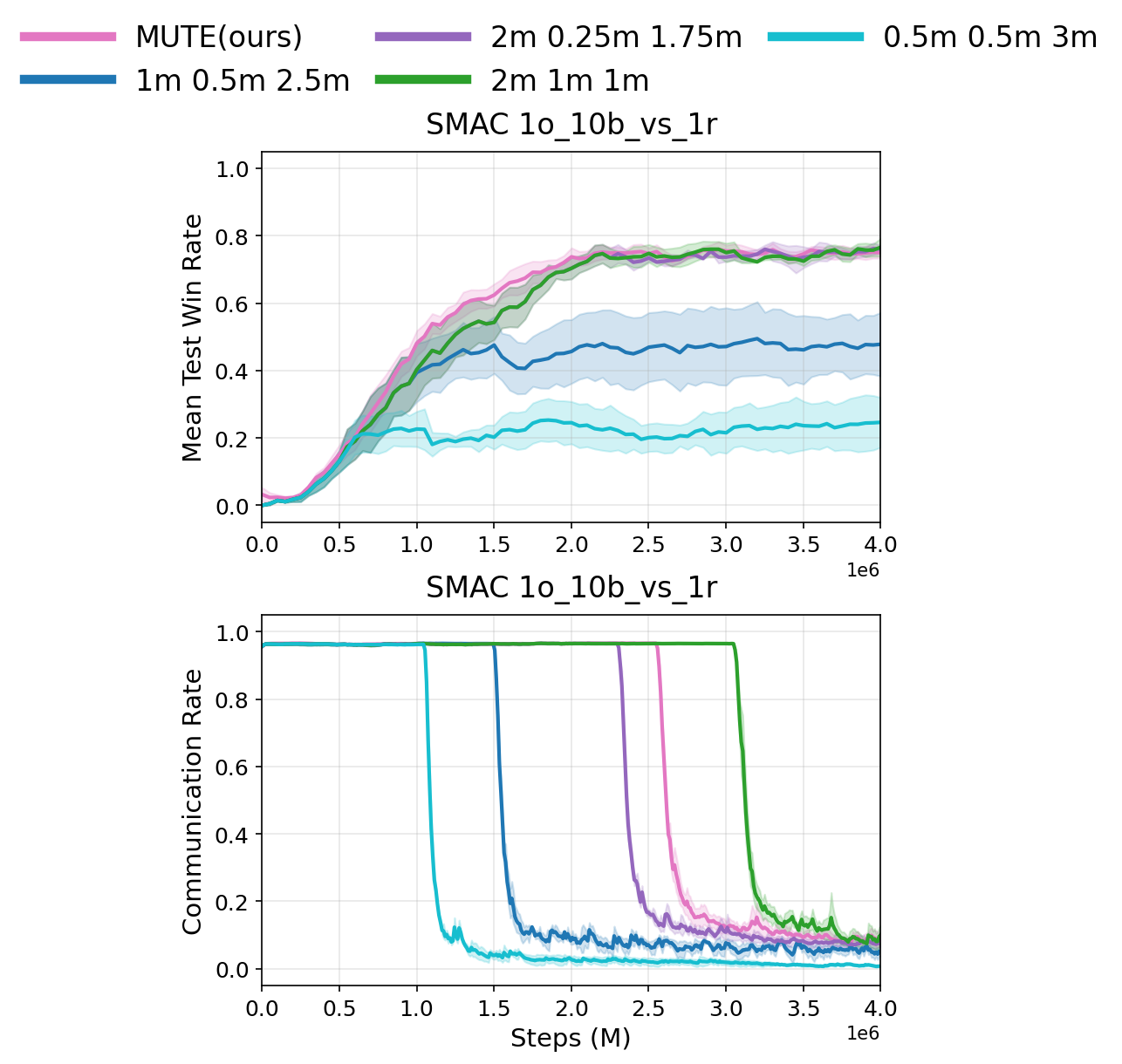} 
    \caption{Training curves of Mean Test Win Rate and Communication Rate for different budget allocations in \textit{smac\_1o\_10b\_vs\_1r}. The sharp drops in communication rate correspond to the initiation of the unlearning phase.}
    \label{fig:budget_ablation}
    \vspace{-1em}
\end{figure}

\begin{table}[H]
    \small
    \centering
    \caption{Final Win Rate and Communication Rate (mean $\pm$ CI) under different training budget allocations in \textit{smac\_1o\_10b\_vs\_1r}. Allocations are denoted in millions of steps as (Pre-training / MVE / Unlearning).}
    \label{tab:budget_ablation}
    \begin{tabular}{lcc}
        \toprule
        {\bfseries Allocation Schedule (M Steps)} & {\bfseries Mean Test Win Rate (\%)} & {\bfseries Communication Rate (\%)} \\
        \midrule
        2.0 / 0.50 / 1.50 (MUTE default) & $0.77 \pm 0.02$ & $0.06 \pm 0.01$ \\
        1.0 / 0.50 / 2.50                & $0.48 \pm 0.09$ & $0.05 \pm 0.02$ \\
        2.0 / 0.25 / 1.75                & $0.76 \pm 0.01$ & $0.09 \pm 0.01$ \\
        2.0 / 1.00 / 1.00                & $0.77 \pm 0.03$ & $0.10 \pm 0.03$ \\
        0.5 / 0.50 / 3.00                & $0.25 \pm 0.07$ & $0.01 \pm 0.00$ \\
        \bottomrule
    \end{tabular}
\end{table}

As shown in \cref{tab:budget_ablation} and \cref{fig:budget_ablation}, the results demonstrate a clear and consistent trend regarding the onset of the unlearning phase. Triggering unlearning too early (e.g., the 0.5M / 0.5M / 3M or 1M / 0.5M / 2.5M schedules) substantially degrades the final coordination performance. This indicates that the dense expert policy has not yet sufficiently converged; attempting to isolate and unlearn redundant messages before the foundational communication strategy is stable leads to catastrophic performance drops.

In contrast, schedules that allocate a full 2M steps to initial pre-training perform much better overall. Among these stable configurations, altering the ratio of MVE training to unlearning (e.g., 2M / 0.25M / 1.75M or 2M / 1M / 1M) yields similar win rates, but slightly inferior communication reduction. Our empirically chosen default allocation (2M / 0.5M / 1.5M) strikes the optimal balance: it achieves one of the strongest final win rates while simultaneously reaching a highly efficient, near-zero final communication rate.

\end{document}